\newtheorem{theorem}{Theorem}
\newtheorem{assumption}{Assumption}
\newtheorem{remark}{Remark}
\newtheorem{lemma}{Lemma}
\newtheorem{proof}{Proof}
\newcommand{\cB}{\mathcal{B}}
\newcommand{\cC}{\mathcal{C}}
\newcommand{\cH}{\mathcal{H}}
\newcommand{\cE}{\mathcal{E}}
\newcommand{\cT}{\mathcal{T}}
\newcommand{\bG}{\bm{G}}
\newcommand{\bA}{\bm{A}}
\newcommand{\bE}{\mathbf{E}}
\newcommand{\bF}{\mathbf{F}}
\newcommand{\bpsi}{\bm{\psi}}
\newcommand{\ESS}{\operatorname{Err}_{N,M}^{\operatorname{ss}}}
\newcommand{\card}{\mathop{\rm card}\nolimits}
\newcommand{\Kop}{\mathcal{K}}
\newcommand{\KopN}{\mathcal{K}_N}
\newcommand{\KopNM}{\mathcal{K}_{N,M}}
\newcommand{\FN}{\mathcal{F}_N}
\newcommand{\cF}{\mathcal{F}}
\newcommand\norm[1]{\left\lVert#1\right\rVert}
\newcommand{\rr}{{\mathbb R}}
\DeclareMathOperator*{\argmin}{arg\,min} 
\DeclarePairedDelimiter\floor{\lfloor}{\rfloor}
\begin{document}

\begin{frontmatter}
\title{A Least-Squares Multi-Step Koopman Operator for\\ Model Predictive Control\thanksref{footnoteinfo}} 
\thanks[footnoteinfo]{Wallace Tan is supported by the MathWorks Fellowship. Liang Wu and Richard Braatz were partially supported by the U.S. Food and Drug Administration under the FDA BAA-22-00123 program, Award Number 75F40122C00200. This research was also supported by the Ralph O’Connor Sustainable Energy Institute at Johns Hopkins University. This paper was not presented at any IFAC 
meeting.\\
$^*$Equal Contributions. Corresponding Author: Liang Wu.
}

\author[JHU,MIT]{Liang Wu$^*$}\ead{wliang14@jh.edu},    
\author[MIT]{Wallace Gian Yion Tan$^*$}\ead{wtgy@mit.edu},               
\author[MIT]{Leqi Zhou}\ead{leqizhou@mit.edu}, 
\author[MIT]{Richard D. Braatz}\ead{braatz@mit.edu},  
\author[JHU]{Jan Drgona}\ead{jdrgona1@jh.edu}
\address[JHU]{Johns Hopkins University, MD 21218, USA.}

\address[MIT]{Massachusetts Institute of Technology, MA 02139, USA.}  

\begin{keyword}                           
Koopman Operator; Model Predictive Control; Extended Dynamic Mode Decomposition.               
\end{keyword}                             

\begin{abstract}                          
MPC is widely used in real-time applications, but practical implementations are typically restricted to convex QP formulations to ensure fast and certified execution. Koopman-based MPC enables QP-based control of nonlinear systems by lifting the dynamics to a higher-dimensional linear representation. However, existing approaches rely on single-step EDMD. Consequently, prediction errors may accumulate over long horizons when the EDMD operator is applied recursively. Moreover, the multi-step prediction loss is nonconvex with respect to the single-step EDMD operator, making long-horizon model identification particularly challenging. This paper proposes a multi-step EDMD framework that directly learns the condensed multi-step state–control mapping required for Koopman-MPC, thereby bypassing explicit identification of the lifted system matrices and subsequent model condensation. The resulting identification problem admits a convex least-squares formulation. We further show that the problem decomposes across prediction horizons and state coordinates, enabling parallel computation and row-wise $\ell_1$-regularization for automatic dictionary pruning. A non-asymptotic finite-sample analysis demonstrates that, unlike one-step EDMD, the proposed method avoids error compounding and yields error bounds that depend only on the target multi-step mapping. Numerical examples validate improved long-horizon prediction accuracy and closed-loop performance.
\end{abstract}

\end{frontmatter}

\section{Introduction}
Model predictive control (MPC) is a model-based optimal control framework that has been widely adopted in manufacturing, energy systems, and robotics. At each sampling instant, MPC requires the solution of a real-time online optimization problem defined by a prediction model, system constraints, and a performance objective. To enable reliable real-time execution, MPC is commonly formulated as a convex quadratic program (QP), since QPs admit fast numerical solution methods \cite{ferreau2014qpoases,stellato2020osqp,wu2023simple} and allow execution-time-certified computation (ensuring that worst-case computation times satisfy real-time feedback constraints) \cite{wu2025direct,wu2025eiqp,wu2025time} on industrial embedded platforms. 

However, the standard QP formulation of MPC requires linear prediction dynamics, affine constraints, and a convex quadratic objective, which hinders its direct applicability in nonlinear dynamical systems. A common workaround to address this limitation is to approximate the nonlinear dynamics via online linearization about the current state or a previous solution trajectory, resulting in a sequence of QPs \cite{berberich2022linear,gros2020linear}. These methods are often referred to as model-based approximation approaches.

Recently, data-driven approximation approaches via the Koopman Operator have emerged as computationally efficient ways to handle nonlinear systems in QP-based MPC. By learning a surrogate linear dynamical system in a high-dimensional observable space, these Koopman-based methods enable nonlinear MPC problems to be cast as compact QPs after eliminating the lifted states via model condensation  \cite{korda2018linear,arbabi2018data,williams2016extending,shi2022deep,mauroy2019koopman}.

In contrast to localized linearization methods, Koopman-MPC approaches \cite{korda2018linear} aim to construct a globally valid linear predictor in the lifted observable space. This distinction is also reflected in the resulting control laws: linearization-based MPC yields a QP parameterized directly by the current state $x(t)$, i.e.,  $u(t)=\mathrm{QP}(x(t))$, whereas Koopman-MPC yields a QP parameterized in a high-dimensional observable space via a nonlinear lifting map $\phi(\cdot)$, i.e., $u(t)=\mathrm{QP}(\phi(x(t))$. This richer parameterization explains the improved closed-loop performance often observed in Koopman-MPC \cite{korda2018linear}.

\subsection{Related work}
The most widely used identification method in Koopman-based MPC is Extended Dynamic Mode Decomposition (EDMD) \cite{williams2015data}, which first specifies a dictionary of nonlinear lifting observables $\phi(\cdots)$ \emph{a priori} and then solves a convex least-squares problem to identify the linear state-space matrices $\{A,B,C\}$. EDMD offers scalable computation and algorithmic simplicity due to its reliance on least-squares estimation, but this comes at the cost of limited representation capacity and sensitivity to the choice of nonlinear observable dictionary. Moreover, its least-squares structure enables analytical bounds on model approximation errors in Koopman-MPC, which provides a tractable foundation for theoretical analysis \cite{strasser2026overview,zhangzuazua2022,NuskePeitzPhilippSchallerWorthmann2023,philipp2024variancerepresentationsconvergencerates}.

In \cite{li2017extended,yeung2019learning}, deep neural network (DNN) based representations of the observables and the linear state-space matrices $\{A,B,C\}$ are jointly identified through a nonconvex learning problem. Within the DNN framework, the training objective can naturally be extended to a multi-step prediction loss \cite{xiao2022deep}. Notably, minimizing this loss promotes high-fidelity multi-step prediction, which is essential in finite-horizon MPC. In contrast, single-step predictors tend to accumulate prediction error as the MPC horizon increases. This may lead to degradation of open-loop and closed-loop performance \cite{su1993neural}, especially when the spectral radius of the Koopman matrix $A$ exceeds unity, in which case prediction errors are amplified \cite{su1993neural,dahdah2022system,korda2020optimal}.

However, extending existing EDMD approaches to a multi-step prediction minimization setting necessitates terms of the form $\{A^k,\cdots,A\}$, which breaks the convexity of the original least-squares problem and results in a nonconvex optimization formulation; see \cite{de2024koopman,sayed2024recursive,lazar2024basis,sayed2024novel,abtahi2025multi,di2024stable}. Moreover, condensing a Koopman-MPC problem into a QP ultimately requires the multi-step state-control mapping: $$(x_{1},\cdots,x_{H})=\mathbf{E}\phi(x(t))+\mathbf{F}(u_0,\cdots,u_{H-1})$$
which is obtained by recursively propagating the lifted linear dynamics defined by $\{A,B,C\}$ over the horizon $H$, where the matrices $\{\mathbf{E},\mathbf{F}\}$ are constructed from $\{A,B,C\}$ accordingly.

\subsection{Contributions}
Motivated by this observation, this paper proposes a multi-step EDMD algorithm that directly learns the matrices $\{\mathbf{E},\mathbf{F}\}$ governing the multi-step state-control mapping, thereby bypassing the identification of $\{A,B,C\}$ and the subsequent construction of $\{\mathbf{E},\mathbf{F}\}$ used in conventional Koopman-MPC approaches. As a result, our algorithm admits a convex least-squares formulation.

In addition, this paper shows that the proposed multi-step EDMD identification problem can be decomposed at a prediction horizon level and state coordinates level, enabling parallelized identification of $\{\mathbf{E},\mathbf{F}\}$ and facilitating the incorporation of row-wise $\ell_1$ regularization on $\mathbf{E}$ for dictionary pruning. Dictionary pruning mitigates the difficulty of dictionary selection by automatically removing functions that are irrelevant to the system dynamics. Consequently, we obtain a least-squares-based multi-step EDMD algorithm with parallelization and integrated dictionary pruning.

Furthermore, we provide a non-asymptotic analysis of both one-step and multi-step EDMD under finite data regimes. For one-step EDMD, we show that model errors compound through repeated composition, potentially leading to error growth that is exponential in the prediction horizon. In contrast, the proposed multi-step EDMD yields error bounds that depend only on the target multi-step mapping, rather than on the accuracy of intermediate EDMD approximations. This distinction provides a principled explanation for the improved long-horizon prediction performance observed with the proposed multi-step EDMD approach.

\subsection{Notations}
Denote $\mathbb{Z}^+$ as the set of positive integers and $\mathbb{N} := \{0\} \cup\mathbb{Z}^+ $ as the set of natural numbers. Given a matrix $P \in \mathbb{R}^{n \times n}$, $P \succeq 0 $ (resp., $P \succ 0$) denotes that $P$ is symmetric positive semi-definite (resp., symmetric positive definite). For a given subset $\mathcal{U} \subset \mathbb{R}^n$ of admissible control actions, denote $\ell(\mathcal{U})$ as the space of all admissible control sequences $\{u_i\}_{i = 0}^\infty$. Given matrices $A_1, A_2, \cdots{}, A_m$ with $A_i \in \mathbb{R}^{n_i \times n_i}$, $\operatorname{blkdiag}(A_1, \cdots{}, A_m) \in \mathbb{R}^{\left(\sum_i n_i\right) \times \left(\sum_i n_i \right)}$ denotes the block-diagonal matrices with entries $A_i$ on the main diagonal and zero matrices everywhere else. $\otimes$ denotes the standard Kronecker product. Given vectors $v_1, v_2, \cdots{},v_n$, let $\operatorname{vec}[v_1, \cdots{},v_n ] := [v_1^\top, v_2^\top, \cdots{}, v_n^\top ]^\top$. Let $\mathbf1^n \in \mathbb{R}^n$ denote the vector with ones in all components. Denote $\norm{\cdot}_2$ and $\norm{\cdot}_1$ as the standard Euclidean norm and $\ell^1$ norms of $v$, respectively. For $n \in \mathbb{Z}^{+}$, let $[n] :=\{0,1,\cdots{},n\}$. If $X$ and $Y$ are Banach or Hilbert spaces and $A : X \to Y$ is a linear operator, $\cB(X,Y)$ denotes the set of bounded linear operators between $X$ and $Y$, and for $A \in \cB(X,Y)$, $\norm{A}_{X\to Y}$ denotes the operator norm of $A$. If $X = Y$, denote $\cB(X,X)$ as $\cB(X)$ and $\norm{A}_{X}$ denotes the operator norm of $A$. If $A \in \mathbb{R}^{n \times m}$, denote $A^{\ddagger} \in \mathbb{R}^{m \times n} $ as the Moore-Penrose inverse of $A$.

Throughout this article, `with high probability' (w.h.p.) means with probability $1-\delta$, where $\delta \in (0,1)$ decays at least polynomially as the sample size increases. The dependence of $\delta$ on the sample size can be made explicit from the proofs.

\section{Problem Formulation and Preliminaries}\label{sec: prelim}

Consider the problem of regulating a nonlinear discrete-time system to the origin:
\begin{equation}\label{eqn_sys}
    x_{t+1}=f(x_t,u_t),
\end{equation}
where $x_t\in \mathbb{R}^{n_x}$ and $u_t\in \mathbb{R}^{n_u}$ denote the system state and control input at the sampling time $t\in\mathbb{Z}^+$. 
MPC addresses this problem by solving a finite-time optimal control problem subject to the dynamics in \eqref{eqn_sys} and additional state and control input constraints. Assuming that a full measurement of the state $x(t)$ at the sampling time $t$ is available, the finite horizon optimal control problem is shown in \eqref{eqn_nmpc}.
\noindent\rule{\columnwidth}{0.4pt}
\textbf{Nonlinear MPC}
\begin{subequations}\label{eqn_nmpc}
\begin{align}
\min_{U} \ & x_{t+H}^\top P x_{t+H} +\sum_{k=0}^{H-1}\left[x_{t+k}^\top Q x_{t+k} + u_{t+k}^\top R u_{t+k} \right] \label{subeq:nmpc-cost} \\
\text{s.t.} \ & x_{t}=x(t), \\
& x_{t+k+1} = f(x_{t+k}, u_{t+k}), \ k=0,\dots,H-1  \ \label{subeq:nmpc-dyn} \\
&   x_{\min} \leq x_{t+k+1}  \leq x_{\max}, \ k=0,\dots,H-1 \label{subeq:mpc-state-constraints} \\
& u_{\min} \leq u_{t+k} \leq u_{\max}, \ k=0,\dots,H-1
\end{align}
\end{subequations}
\noindent\rule{\columnwidth}{0.4pt}
This optimization problem is solved at each sampling time $t$, where $x_{t+k}$ denotes the predicted state vector at time $t+k$ obtained by applying the control input sequence $u_t,\cdots,u_{t+k-1}$ to the system \eqref{eqn_sys} starting from the state $x(t)$, and $N$ denotes the prediction horizon. In \eqref{eqn_nmpc}, we further assume that $Q=Q^\top\succeq0,\ R=R^\top\succ0,\ P=P^\top\succeq 0$. The decision variables in the optimization problem \eqref{eqn_nmpc} are the control inputs $U:=\left\{u_{t},u_{t+1},\cdots{},u_{t+H-1} \right\}$. The corresponding state trajectory $X:=\left\{x_{t+1},\cdots{},x_{t+H} \right\}$ is implicitly defined by the nonlinear system dynamics \eqref{subeq:nmpc-dyn} and the control input sequence $U$. The nonlinear MPC problem in general is a nonlinear program due to the nonlinear dynamics arising from \eqref{subeq:nmpc-dyn}. Solving the nonlinear program at every sampling time can be computationally expensive due to the general intractability of such problems, especially for the MPC problem with long prediction horizons. To mitigate the computational burden of this problem, the Koopman operator lifts the nonlinear systems \eqref{eqn_sys} into a higher-dimensional space of observables, where the system evolves with linearized dynamics. This transformation allows the reformulation of \eqref{eqn_nmpc} into a convex QP that can be solved using standard QP solvers such as interior point methods.  

\subsection{Koopman operator and Extended Dynamic Mode Decomposition}\label{subsec:EDMD}
Koopman \cite{koopman1931hamiltonian} proposed an approach grounded in operator theoretic methods to represent uncontrolled discrete-time nonlinear dynamical systems $ x_{t+1}=f(x_t)$ linearly. Specifically, Koopman demonstrated the existence of an infinite-dimensional linear operator $\mathcal{K}$, which governs the evolution of an infinite-dimensional Hilbert space of observables. Given an observable function $\psi : \mathbb{R}^{n_x} \to \mathbb{R}$, the Koopman operator acts on $\psi$, and $\mathcal{K}\psi$ is defined as
\begin{equation}\label{eqn_KoopmaM_mef}
    \mathcal{K} \psi := \psi\circ f.
\end{equation}
Although the Koopman operator was initially described for autonomous dynamical systems, numerous schemes (see \cite{williams2016extending,proctor2018generalizing,korda2018linear,strasser2026overview}) have been proposed to extend the application of the Koopman operator to controlled systems of the form \eqref{eqn_sys}. 
To generalize the Koopman operator to controlled systems, we adopt the scheme from \cite{korda2018linear} which introduces an extended state vector: $\mathcal{X}=\left[\begin{array}{@{}c@{}}
    x \\
    \mathbf{u}
\end{array}\right]\!,$ where $\mathbf{u}:= \left\{\mathbf{u}(i)\right\}_{i=0}^\infty\in l(\mathcal{U})$. The dynamics of the extended state $\mathcal{X}$ are described as $f_{\mathcal{X}}(\mathcal{X}) = \left[\begin{array}{@{}c@{}}
     f(x,\mathbf{u}(0))  \\
     \boldsymbol{S}\mathbf{u}
\end{array}\right]\!,$ where $\mathbf{u}(i)$ denotes the $i$th element of $\mathbf{u}$ and $\boldsymbol{S}$ represents the left shift operator, $(\boldsymbol{S}\mathbf{u})(i):=\mathbf{u}(i+1)$. Then, the Koopman operator associated with the dynamics of the extended state can be defined on the set of extended observables $\phi(\mathcal{X})$ as $\mathcal{K}\phi:=\phi\ \circ f_{\mathcal{X}}.$

The infinite-dimensional Koopman operator must be truncated in practice, and several finite-dimensional approximations have been proposed (see, e.g., \cite{williams2015data,williams2016extending,korda2018linear,li2017extended,guo2025learning}). A widely utilized data-driven truncation method is (single-step) Extended Dynamic Mode Decomposition (EDMD), where the set of extended observables is defined as the ``lifted" mapping
\begin{equation}
    \phi(x,\mathbf{u}) = \left[\begin{array}{@{}c@{}}
        \psi(x) \\
        \mathbf{u}(0)
    \end{array}\right]\!,
\end{equation}
where $\psi(x):=\left[\psi_1(x), \cdots{}, \psi_{N}(x)\right]^\top$, $N$ is the number of observables (with $N \gg n_{x}$), and $\mathbf{u}(0)$ denotes the first component of the sequence $\mathbf{u}$. The EDMD framework represents nonlinear observables $\phi(x,\mathbf{u})$ using a predefined basis function set (e.g., Radial basis functions used in \cite{korda2018linear}, Koopman eigenfunctions in \cite{mauroy2016global}) rather than learning the observables from data. Only the Koopman operator is learned via an optimization procedure. In particular, the approximate Koopman operator identification problem is reduced to a least-squares problem, which assumes that the sampled data $\{(
x_{j},\mathbf{u}_j),(x_{j,+},\mathbf{u}_{j,+})\}~\forall j=1,\cdots{}, M$ is collected with the update mapping $\left[\begin{array}{@{}c@{}}
    x_{j,+} \\
    \mathbf{u}_{j,+}
\end{array}\right]\! =\! \left[\begin{array}{@{}c@{}}
     f(x_j,\mathbf{u}_j(0))  \\
     \boldsymbol{S}\mathbf{u}_j
\end{array}\right],$
where the subscript $+$ denotes the value at the next time step. An approximation of the Koopman operator, $\mathcal{A}$, is then obtained by solving
\begin{equation}\label{problem_EDMD_original}
J(\mathcal{A}) = \min_{\mathcal{A}}\sum_{j=1}^{M}\|\phi(x_{j,+},\mathbf{u}_{j,+})-\mathcal{A}\phi(x_j,\mathbf{u}_j)\|^2. 
\end{equation}
Since there is no need to predict the rest of the control input sequence, the last $n_u$ rows of $\mathcal{A}$ can be discarded. Additionally, if we define $\bar{\mathcal{A}}$ as the remaining part of $\mathcal{A}$ after discarding the part associated with the future control input, we observe that $\bar{\mathcal{A}}$ can be decomposed into $A\in\rr^{N\times N}$ and $B\in\rr^{N\times n_u}$ as $\bar{\mathcal{A}} = \left[A, B\right]$, so that the problem (\ref{problem_EDMD_original}) can be reduced to
\begin{equation}\label{problem_EDMD_reduced}
    J(A,B) = \min_{A,B} \sum_{j=1}^{M}\|\psi(x_{j,+})-A\psi(x_{j})-B\mathbf{u}_j(0)\|_2^2.
\end{equation}
According to \cite{korda2018linear}, if the designed lifted mapping $\psi(x)$ contains the state $x$ after the re-ordering $\psi(x)\leftarrow [x^\top, \psi(x)]^\top$, then the output matrix $C=[I,0]$.
The learned linear Koopman predictor model is given as
\begin{equation}\label{eqn_Koopman_linear}
\psi_{k+1} = A \psi_k + Bu_k,~ x_{k+1} = C \psi_{k+1},
\end{equation}
where $\psi_k:=\psi(x_k) \in\rr^{n_\psi}$ denotes the lifted state space and with $\psi_0 = \psi(x(t))$.

\subsection{Condensed Koopman-MPC formulation}
Although the Koopman operator may require a large lifted dimension to obtain an accurate linear approximation, this does not increase the dimension of the resulting QP problem. Specifically, by recursively applying \eqref{eqn_Koopman_linear} to eliminate the lifted states $\psi_k$, we obtain 
{\small
\begin{equation}\label{eqn_X_U_E_F}
\left[\begin{array}{c}
             x_{t+1}  \\
             x_{t+2} \\
             \vdots \\
             x_{t+H}
\end{array}\right] \!= \mathbf{E} \psi(x_t) + \mathbf{F} \!\left[\begin{array}{c}
     u_{t}  \\
     u_{t+1} \\
     \vdots \\
     u_{t+H-1}
\end{array}\right]
\end{equation}
where
\begin{equation}\label{eqn_E_F_def}
 \mathbf{E} :=\! \left[\begin{array}{c}
    CA  \\
    CA^2 \\
    \vdots \\
    CA^H
\end{array}\right]\!,\ \ \mathbf{F} := \!\left[\begin{array}{cccc}
     CB & 0 & \cdots & 0  \\
     CAB & CB & \cdots & 0 \\
     \vdots & \vdots & \ddots & \vdots \\
     CA^{H-1}B &  CA^{H-2}B & \cdots & CB
\end{array}\right]       
\end{equation}
}
where $H$ is the prediction horizon. By embedding \eqref{eqn_X_U_E_F} into the quadratic objective \eqref{subeq:nmpc-cost} and the state constraint \eqref{subeq:mpc-state-constraints}, the nonlinear MPC \eqref{eqn_nmpc} can be reduced to the condensed QP problem in the decision vector $U$ shown in \eqref{eq: condensed MPC} \cite{rawlings2020model}. The higher-dimensional observable $\psi(x_t)$ is computed only once per QP problem, and will not be performed in QP iterations. By performing the condensing procedure described above, the computational burden of having a high-dimensional Koopman lifting is substantially minimized when solving the MPC problem.

\noindent\rule{\columnwidth}{0.4pt}
\textbf{Condensed Koopman-MPC $\rightarrow$ General QP}
\begin{subequations}\label{eq: condensed MPC}
\begin{align}
\min_{U} \ & U^\top \bar{R}U + (\mathbf{E}\psi(x_t)+\mathbf{F}U)^\top\bar{Q} (\mathbf{E}\psi(x_t)+\mathbf{F}U)  \\
\text{s.t.} \ & \mathbf1^H \otimes x_{\min} \leq \mathbf{E}\psi(x_t)+\mathbf{F}U \leq \mathbf1^H \otimes x_{\max}  \\
& \mathbf1^H \otimes u_{\min}\leq U \leq \mathbf1^H \otimes u_{\max}
\end{align}
\end{subequations}
\noindent\rule{\columnwidth}{0.4pt}
where $\bar{R}=\mathrm{blkdiag}(R,\cdots{},R)$ and $\bar{Q}=\mathrm{blkdiag}(Q,\cdots{},Q,P)$.

\section{Multi-step Extended Dynamic Mode Decomposition}\label{sec: multistep}
Data-driven Koopman operators inevitably introduce approximation errors, and the key to successfully integrating Koopman operators with MPC is ensuring that these approximation errors do not accumulate along the prediction horizon. However, the EDMD approach in Subsection \ref{subsec:EDMD}, particularly \eqref{problem_EDMD_reduced}, is based on minimizing one-step-ahead prediction error, and could potentially suffer from accumulated approximation errors, especially in scenarios with a long prediction horizon. Consequently, the one-step-ahead EDMD approach \eqref{problem_EDMD_reduced} can be modified to incorporate multi-step prediction error:
\begin{equation}\label{eqn_EDMD_ms}
\min_{A,B}\sum_{j=1}^{M_m}\sum_{k=1}^{H}\left\|\psi(x_{j,k})-A^k\psi(x_{j,0})-\sum_{m=0}^{k-1} A^{k-1-m}Bu_{j,m} \right\|_2^2    ,
\end{equation}
where $H$ is the prediction horizon and $M_m$ is the number of sampling data trajectories:
\[
\mathcal{D}= \left\{(x_{j,k},x_{j,0},u_{j,k-1}) \right\},~\forall j\in[1, M_m],~\forall k \in [1, H]
\]
($x_{j,k}$ and $u_{j,k}$ denote the state and control input at the $k$th sampling time of the $j$th trajectory). However, \eqref{eqn_EDMD_ms} is neither a convex nor a least-squares optimization problem, as compared to \eqref{problem_EDMD_reduced}. 

This article proposes a least-squares-based multi-step EDMD approach for real-time MPC applications. Inspired by the observation that the Koopman-MPC framework ultimately uses the representation \eqref{eqn_X_U_E_F} (the matrices $\mathbf{E}$ and $\mathbf{F}$), rather than the intermediate matrices $A$ and $B$, our proposed multi-step EDMD approach directly learns the matrices $\mathbf{E}$ and $\mathbf{F}$ in \eqref{eqn_X_U_E_F} from the data $\mathcal{D}$. Specifically, we obtain $\mathbf{E}$ and $\mathbf{F}$ from the multi-step prediction error minimization problem:
\begin{equation}\label{eqn:multisteperror}
\min_{\bE,\bF} \sum_{j=1}^{M_m} \bigg\lVert X_j -  \bE \psi(x_{j,0}) - \bF U_{j}\bigg\rVert_2^2  
\end{equation}
where $X_j = \mathrm{vec}[x_{j,1},x_{j,2}, \cdots{}, x_{j,H}]$ and $U_j = \mathrm{vec}[u_{j,0},u_{j,1},\cdots{}, u_{j,H-1}]$ are trajectory samples from the data set $\mathcal{D}$. This reformulated problem is now a least-squares optimization that is convex and further parallelizable.   
\begin{remark}
    (\textbf{Representation power of multi-step EDMD}):
    In conventional one-step EDMD, a linear-time-invariant model 
    \[
    \begin{aligned}
        \psi_{k+1} &= A\psi_k+Bu_k\\
        x_{k} &= C\psi_k
    \end{aligned}
    \]
    is learned to approximate the nonlinear dynamics. Our proposed multi-step EDMD approach directly learns the condensed MPC matrices $\bE$ and $\bF$, which can implicitly encode the dynamics as linear horizon-varying models of the form
    \begin{equation}
        \begin{aligned}
            \psi_{k+1} &= A_k \psi_k + B_k u_k,~ k=0,\cdots{},H-1,\\
            x_{k+1} &= C_k \psi_{k+1},~ k=0,\cdots{},H-1.
        \end{aligned}
    \end{equation}
    Thus, our approach has better representation power as compared to the conventional one-step EDMD. 
\end{remark}

\subsection{Decomposition at the prediction horizons level and state coordinates level}
To enforce the structural constraint that $\bF$ is a lower triangular matrix, this article adopts a simple trick that learns $E_k, F_k$, namely the $k$-th row-block of $\bE$ and $\bF$ defined in \eqref{eqn:multisteperror} that approximates $x_k$ by right multiplication with $\psi(x_0)$ and $u_0,\cdots,u_{k-1}$, independently and in parallel. To do so, we divide the whole dataset $\mathcal{D}$ into $H$ pieces, each of which corresponds to one horizon step $k\in [1,H]$. 
For $k=1,\ldots,H$, denote 
\[
\mathcal{D}_k=\{(x_{j,k},x_{j,0},u_{j,0},\cdots{},u_{j,k-1})\}_{j=1}^{M_m}.
\]
We can then solve the least-squares optimization problems in parallel:
\begin{equation}\label{eqn_min_E_k_F_k}
\begin{aligned}
    &\textbf{for}~k=1:H \textbf{ (in parallel)}\\
    &\quad \min_{E_k, F_k}\sum_{\mathcal{D}_k} \left\|x_{j,k}-E_k \psi(x_{j,0}) -F_k \!\left[\begin{array}{c}
         u_{j,0}  \\
         \vdots\\
         u_{j,k-1}
    \end{array}\right]  \right\|_2^2\\
    &\textbf{end}
\end{aligned}
\end{equation}

\begin{remark}\label{rmk:parallel-sec3}
    Due to the parallel property of matrix-vector multiplication, \eqref{eqn_min_E_k_F_k} can be equivalently decomposed into the parallelizable subproblems
\begin{equation}\label{eqn_min_E_k_i_F_k_i}
\begin{aligned}
    &\textbf{for}~i=1:n_x , \, k = 1:H \textbf{ (in parallel)}\\
    &\quad \min_{E_{k,i}, F_{k,i}} \sum_{\mathcal{D}_k}\left\|x_{j,k}^i- E_{k,i}^\top \psi(x_{j,0})- F_{k,i}^\top \!\left[\begin{array}{c}
         u_{j,0}  \\
         \vdots\\
         u_{j,k-1}
    \end{array}\right]  \right\|_2^2\\
    &\textbf{end}
\end{aligned}
\end{equation}
where $x_{j,k}^i$ denotes the $i$th element of $x_{j,k}$, $E_{k,i}$ and $F_{k,i}$ denote the transposes of $i$th row vectors of matrices $E_k$ and $F_k$, respectively.
\end{remark}

\section{Theoretical Analysis}\label{sec: error}
This section provides a comparative analysis of the learning error of conventional single-step EDMD and multi-step EDMD for long-horizon prediction. In Section \ref{subsec:EDMD}, we observed that single-step EDMD learns a single-step Koopman operator and generates long-horizon predictions by composing the linear predictor model. This single-step approach may potentially lead to the accumulation of approximation error across the prediction horizon in MPC. In contrast, our multi-step EDMD method in Section \ref{sec: multistep} learns the horizon-dependent input-output matrices, which directly circumvents the issue of error accumulation. To justify our hypothesis, we provide non-asymptotic error bounds under finite data budgets to demonstrate that multi-step EDMD yields stable long-horizon predictions. Our finite-sample analysis uses the Matrix Bernstein Inequality (Theorem 5.4.1 in \cite{Vershynin_2018}), which complements existing approaches in the EDMD literature \cite{NuskePeitzPhilippSchallerWorthmann2023,zhangzuazua2022,philipp2024variancerepresentationsconvergencerates,PHILIPP2024kernelacha,hertel2025koopmanstochasticdynamicserror}. 

\subsection{Analysis framework and assumptions}
We first begin with a framework for the analysis of both methods. We consider a discrete-time autonomous dynamical system $x_{t+1} = f(x_{t}) $, where $\Omega$ is a compact subset of $\mathbb{R}^{n_x}$ and $f: \Omega \to \Omega$ is continuous. For concreteness, we take $\Omega = [-1,1]^{n_x}$ equipped with the uniform probability measure $d\mu = 2^{-n_x} dx$, which simplifies the analysis and allows for canonical choices of observables such as Polynomial Chaos Expansions (PCEs). This choice is not essential to the results presented in this section. 

The Koopman operator defined in \eqref{eqn_KoopmaM_mef} is a linear operator that acts on an infinite-dimensional vector space of observable functions, and therefore its analysis requires a suitable function space. Let $\cH = L^2(\mu)$ be the Hilbert space of square-integrable functions with respect to $\mu$, equipped with the inner product
\begin{equation}
\langle f, g\rangle_\cH := \int_\Omega fg\; d\mu = \frac{1}{2^{n_x}} \int_{[-1,1]^{n_x}} fg \;dx .
\end{equation}
\begin{remark}
The Hilbert space norm admits the probabilistic representation:
\begin{equation}
\norm{f}_\cH^2 = \int_\Omega f^2\; d\mu = \underset{x\sim\mu}{\mathbb{E}}[f(x)^2].
\end{equation}
To avoid notational clutter, we consistently use the Hilbert space norm with the understanding that $\norm{f}_\cH^2 = \underset{x\sim\mu}{\mathbb{E}}[f^2]$.
\end{remark}
Throughout the article, we assume that the Koopman operator is always bounded to allow repeated composition. 
\begin{assumption}\label{assum:bounded}
The Koopman operator $\mathcal{K} \in \cB(\cH) $ is a bounded linear operator on $\cH$.
\end{assumption}
Assumption \ref{assum:bounded} holds whenever $\mu\circ f^{-1} \ll \mu$ is absolutely continuous, and the Radon-Nikodym derivative $\frac{d\mu\circ f^{-1}}{d\mu}$ is bounded. Indeed, for any $\psi \in L^2 (\mu)$, 
\begin{equation}
\begin{aligned}
\|\psi\circ f\|_{\cH}^2
&= \int_\Omega (\psi\circ f)^2\, d\mu
= \int_\Omega \psi^2\, d(\mu\circ f^{-1}) \\
&\le \Big\|\frac{d\mu\circ f^{-1}}{d\mu}\Big\|_{L^\infty(\mu)}
\|\psi\|_{\cH}^2 .
\end{aligned}
\end{equation}
Thus $\mathcal{K}$ is a bounded linear operator. 
\subsection{Koopman approximations for EDMD}
\paragraph{Single-step EDMD} We first review some operator-theoretic properties of the conventional single-step Koopman operator and EDMD approximations. Approximating the Koopman operator $\mathcal{K}$ numerically requires restriction to a finite-dimensional vector space of observables. Specifically, let $\{\psi_l\}_{l=1}^{\infty}$ be a complete orthonormal basis of observables in $\cH$, and let $ \mathcal{F}_{N} := \operatorname{span}\{\psi_l\}_{l=1}^{N}$ be the dictionary. The projection of the Koopman operator onto $\mathcal{F}_N$ is 
\begin{equation}
\mathcal{K}_N := P_{N} \mathcal{K} \big|_{\mathcal{F}_N },
\end{equation}
where $P_{N}$ denotes the Hilbert space projection onto the subspace $\mathcal{F}_N$. When $\KopN$ is viewed as an operator on $\cH$, we implicitly identify $\KopN$ with its natural extension $P_{N}\Kop P_{N}$.

We now recollect some important properties of $\KopN$. Since $P_{N} \to I$ strongly and $\Kop$ is a bounded linear operator, $\KopN \to \Kop$ in the strong operator topology. This means that, for any $g \in \cH$, 
\begin{equation}
\lim_{N \to \infty} \norm{\KopN g - \Kop g}_{\cH} = 0 .
\end{equation}
The details can be found in Thm.\ 3 of \cite{korda2018convergence}. Additionally, for any $g \in \cF_{N}$, 
\begin{equation}
\KopN g = \argmin_{h \in \cF_N}  \norm{\Kop g - h}.
\end{equation}
In single-step EDMD, we are given i.i.d samples $\{x_j\}_{j=1}^M \sim \mu$ and consider the empirical measure $\hat{\mu}_{M} := \sum_{j=1}^M \delta_{x_j} $ where $\delta_{x}$ is the Dirac delta measure at $x$. This equips $\cH$ with the empirical inner product
\begin{equation}\label{eqn_emp_inner}
\langle f, g\rangle_M :=\frac1M \sum_{j=1}^M f(x_j) g(x_j) ,
\end{equation}
and we denote $P_{N,M} : \mathcal{H} \to \cF_{N}$ as the orthogonal projection with respect to $\hat{\mu}_{M}$. Similarly, define the EDMD operator as 
\begin{equation}
\KopNM := P_{N,M} \Kop \big|_{\mathcal{F}_N },
\end{equation}
and identify $\KopNM$ with its implicitly defined natural extension $P_{N,M}\Kop P_{N}$ on $\cH$. Since $\cF_{N}$ is finite dimensional, the almost sure convergence of $K_{N,M}g \to K_{N}g $ for any $g \in \cF_{N}$ implies that $\KopNM\to \KopN$ almost surely in the operator norm topology,
\begin{equation}
\lim_{M \to \infty} \norm{\KopNM - \KopN}_{\cF_{N}} = 0.
\end{equation}
Let $\bpsi(x):= [\psi_1(x), \cdots{}, \psi_{N}(x)]^\top $ be the vector with observable entries from $\cF_{N}$. The single-step EDMD constructs the approximation $\mathcal{K}_{N,M} : \FN \to \FN$ by solving the empirical loss problem 
\begin{equation}\label{eqn:emploss}
A_{N,M} = \argmin_{A \in\mathbb{R}^{N \times N}} \frac{1}{M} \sum_{j=1}^M \norm{\bpsi(f(x_j)) - A \bpsi (x_j) }_2^2 
\end{equation}
where $A_{N,M}$ is the matrix representation of $\KopNM$ on the basis $\cF_N$. If we define
\begin{small}
\begin{equation}
\bG_M := \frac{1}{M} \sum_{j=1}^M \bpsi(x_j) \bpsi(x_j)^\top, \quad \bA_M := \frac{1}{M} \sum_{j=1}^M \bpsi(f(x_j)) \bpsi(x_j)^\top 
\end{equation}
\end{small}
then a solution to the empirical loss problem \eqref{eqn:emploss} is 
\begin{equation}
A_{N,M} = \bG_M^{\ddagger}  \bA_M.
\end{equation}
Once $A_{N,M}$ is computed, it can be used to generate predictions of the dynamics. For simplicity, we first impose an assumption that is true for many choices of observables, such as polynomial dictionaries. 
\begin{assumption}\label{ass:coordinate}
For any $i \in [1,n_x]$, the coordinate observables $g_i : \Omega \to \mathbb{R}$ defined by $g_i (x) := x^i$, where $x^i$ is the $i$th coordinate of $x$, satisfy $g_i \in \cF_{N}$. 
\end{assumption}
 Specifically, this assumption implies that, for any $g \in \cF_{N}$, there exists a unique coefficient vector $a_{g} \in \mathbb{R}^{N}$ such that $g = a_{g}^\top \bpsi$. We then define the output matrix $C \in \mathbb{R}^{n_x \times N}$ by $$C := [a_{g_1},\cdots{}, a_{g_{n_x}}]^\top $$ and a corresponding output operator $\cC : \cF_{N} \to \mathbb{R}^{n_x}$ by $\cC g := C a_g$. Given a current state $x_0 \in \Omega$, we first lift it into the observable space $z_0 = \bpsi(x_0)$, and evolve the dynamics linearly in the observable space by iterating $z_{i+1} \approx A_{N,M} z_i $. Thus, we have the approximate linear-time invariant (LTI) dynamics 
\begin{subequations}\label{eq:LTI}
\begin{align}
z_{t}& = A_{N,M}z_{t-1},  \\
x_{t}& = Cz_{t}. 
\end{align}
\end{subequations}

\paragraph{ Polynomial chaos expansions as dictionary observables}
Up to now, the Koopman operator framework is agnostic with respect to the choice of the complete orthonormal basis of observables $\{\psi_l\}_{l=1}^{\infty}$. To obtain explicit non-asymptotic error bounds for the bias error between $\Kop$ and $\KopN$, we adopt a concrete choice of observables by using polynomial chaos expansions, which are the canonical choice for the uniform measure on $\Omega$. 
The (normalized) Legendre polynomials $\{ \Phi_\alpha(x) \}_{\alpha \in \mathbb{N}}$ form a complete orthonormal basis of $L^2([-1,1])$ and are given by the formula: 
\begin{equation}\label{eq:legendre}
\phi_\ell(x) = \sqrt{ 2\alpha+1}\frac{1}{2^\alpha!} \frac{d^\alpha}{dx^\alpha} \left[(x^2 - 1)^\alpha\right]
\end{equation}
Given a multi-index $\boldsymbol{\alpha}  =  (\alpha_1,\alpha_2, \cdots{}, \alpha_{n_x})\in \mathbb{N}^{n_x}$, and $x = (x^1, x^2 ,\cdots{}, x^{n_x}) \in \mathbb{R}^{n_x}$, define 
\begin{equation}
\Phi_{\boldsymbol{\alpha}} (x) := \prod_{i=1}^{n_x} \phi_{\alpha_{i}}(x^i).
\end{equation}
It then follows that any $g \in \cH$ can be written as a Polynomial Chaos Expansion (PCE):
\begin{equation}\label{eq: PCE}
g(x) = \sum_{\boldsymbol{\alpha} \in \mathbb{N}^{n_x}} c_{\boldsymbol{\alpha}} \Phi_{\boldsymbol{\alpha}}(x). 
\end{equation}
Truncation of the series \ref{eq: PCE} enables us to approximate $g$ as a PCE with some residual $\cH$ error. The following theorems provide a non-asymptotic bound of this projection error assuming the Sobolev regularity of $g$.
\begin{lemma}\label{lemma: PCE decay}
For any $g \in H^s_\mu(\Omega)$, there exists a constant $C_{s,n_x}>0$ only depending on $s >0 $ and $n_x$ such that, for any $0 \leq  q \leq \floor{s/2}$,
\begin{equation}\label{eq:pcedecay}
\norm{g - \sum_{\boldsymbol{\alpha} \in [p]^{n_x}} c_{\boldsymbol{\alpha}} \Phi_{\boldsymbol{\alpha}}}_{H^q_\mu(\Omega)} \leq C_{s,n_x}p^{e(q,s)} \norm{g}_{H^s_\mu(\Omega)}
\end{equation}
where 
\begin{equation}
e(q,s) := \begin{cases}
2q-s-\frac12, & q > 0 \\
-s, & q = 0 
\end{cases}
\end{equation}
and $\norm{g}_{H^q_\mu(\Omega)}$ denotes the weighted Sobolev norm
\begin{equation}
\norm{g}_{H^q_\mu(\Omega)} :=
\bigg(
\sum_{\|\boldsymbol{\alpha}\|_1\le q}
\norm{D^\alpha g}_\cH^2
\bigg)^{\!\!1/2}.
\end{equation}
\end{lemma}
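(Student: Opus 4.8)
The plan is to recognize the statement as the (tensorized) Canuto--Quarteroni spectral approximation estimate for Legendre expansions, and to prove it by reducing the multivariate bound to a sharp one-dimensional projection estimate and then tensorizing across the $n_x$ coordinate directions. Write the truncation operator $\Pi_p g := \sum_{\boldsymbol\alpha\in[p]^{n_x}} c_{\boldsymbol\alpha}\Phi_{\boldsymbol\alpha}$, and observe that, by the product structure $\Phi_{\boldsymbol\alpha}(x)=\prod_i\phi_{\alpha_i}(x^i)$, it factorizes as the composition $\Pi_p=\pi_p^{(1)}\cdots\pi_p^{(n_x)}$ of the univariate degree-$p$ Legendre projections $\pi_p^{(i)}$ acting in the $i$th variable (these commute since they act on different coordinates). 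The whole point is to convert Sobolev regularity of $g$ into algebraic decay of the coefficients $c_{\boldsymbol\alpha}$ and then to pay for each spatial derivative appearing in the $H^q_\mu$ norm.

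First I would treat the univariate case $n_x=1$, which is the crux; this is exactly the classical Legendre projection estimate. Expanding $u=\sum_n\hat u_n\phi_n$ with $\hat u_n=\langle u,\phi_n\rangle_\cH$, two ingredients drive the bound. (i) \emph{Coefficient decay}: introduce the Legendre--Sturm--Liouville operator $\cL u=-((1-x^2)u')'$, for which $\cL\phi_n=\lambda_n\phi_n$ with $\lambda_n=n(n+1)\asymp n^2$; repeated integration by parts gives $\lambda_n^{k}\hat u_n=\langle\cL^{k}u,\phi_n\rangle_\cH$, and since the endpoint weight satisfies $(1-x^2)\le1$ the operator powers are dominated by the plain Sobolev norm, yielding $\sum_{n>p}(1+\lambda_n)^{s}|\hat u_n|^2\le C\|u\|_{H^s_\mu}^2$. (ii) \emph{Sharp differentiation}: up to a constant, $\phi_n^{(m)}$ is the Gegenbauer polynomial $C_{n-m}^{(m+1/2)}$, so the family $\{\phi_n^{(m)}\}_n$ is orthogonal with respect to the weight $(1-x^2)^m$, and the passage from this weighted inner product back to the unweighted $\cH=L^2(\mu)$ norm of the tail $u-\pi_p u=\sum_{n>p}\hat u_n\phi_n$ is governed by the endpoint growth $\|\phi_n\|_{L^\infty}\asymp n^{1/2}$. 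Combining (i) and (ii), the dominant derivative order $m=q$ together with indices $n\approx p$ produces the rate $p^{2q-s-1/2}$ for $q>0$, whereas for $q=0$ only the orthonormality of $\{\phi_n\}$ in $\cH$ is used and the pure spectral decay gives $p^{-s}$. This reproduces precisely the dichotomy encoded in $e(q,s)$.

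Then I would tensorize using the one-variable-at-a-time telescoping identity $I-\Pi_p=\sum_{i=1}^{n_x}\big(\prod_{l=1}^{i-1}\pi_p^{(l)}\big)\,(I-\pi_p^{(i)})$. Each summand is estimated by applying the univariate bound in the $i$th direction while the remaining projections $\pi_p^{(l)}$ are controlled uniformly on the relevant Sobolev scale. Because the norms $\|\cdot\|_{H^q_\mu}$ and $\|\cdot\|_{H^s_\mu}$ in the lemma are \emph{isotropic} (sums over $\|\boldsymbol\alpha\|_1\le q$ and $\le s$), each directional application is bounded by $\|g\|_{H^s_\mu(\Omega)}$, and collecting the $n_x$ terms yields the dimension-dependent constant $C_{s,n_x}$ while preserving the exponent $e(q,s)$.

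The hard part will be the univariate step for $q\ge1$, namely extracting the sharp half-power $-\tfrac12$ rather than the crude exponent $2q-s$ that a triangle-inequality estimate on $\sum_{n>p}|\hat u_n|\,\|\phi_n^{(m)}\|_\cH$ would give. The sharpening genuinely requires exploiting the (weighted) orthogonality of the derivative families through the Gegenbauer structure, and the delicate point is the comparison between the weighted Sobolev norm in which these derivatives diagonalize and the unweighted, isotropic norm appearing in the statement; it is the non-vanishing boundary contributions at $x=\pm1$ in the diagonalizing integration by parts that produce the endpoint loss and hence the $-\tfrac12$. A secondary, but routine, subtlety in the multivariate step is ensuring that the mixed-derivative terms generated by the telescoping are all controlled by the isotropic $H^s_\mu$ norm of $g$; once the sharp univariate constant and exponent are secured, this reduces to bookkeeping that fixes the final constant $C_{s,n_x}$.
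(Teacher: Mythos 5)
The paper does not actually prove this lemma: its ``proof'' is a one-line citation to Theorems 2.3 and 2.4 of Canuto and Quarteroni (1982), which are precisely the univariate and tensorized Legendre projection estimates you set out to establish. Your sketch reconstructs the standard argument behind those theorems (Sturm--Liouville coefficient decay, the Gegenbauer orthogonality of the derivative families that produces the sharp $-\tfrac12$, and one-direction-at-a-time tensorization), so it is essentially the same approach, just carried out rather than outsourced to the reference.
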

\begin{proof}
The proof is found in Thms.\ 2.3 and 2.4 of \cite{canuto1982approximation}.
\end{proof}
By noting $H^0_\mu(\Omega) = \cH $, the above theorem demonstrates that, by using the multivariate Legendre basis, the population approximation error for any $g \in H^s_\mu(\Omega)$ decays at a rate $O(p^{-s})$, where $p$ is the maximal degree of the polynomial in each coordinate. This allows us to quantify the bias error for both single-step EDMD and multi-step EDMD, which will be utilized later. First, we make some assumptions on the discrete-time dynamics and the Koopman operator:
\begin{assumption}[Regularity]\label{ass:regularity}
For each $k\in[H]$, the $k$-step map $f^{(k)}:\Omega\to\Omega$ belongs to
$H^s_\mu(\Omega)$ for some $s \geq 2$, i.e., it is weakly differentiable up to order $s$ and
$\|f^{(k)}\|_{H^s_\mu(\Omega)}<\infty$.
\end{assumption}
\begin{remark}
Since $f^{(k)}:\Omega\to\Omega\subseteq\mathbb{R}^{n_x}$ is vector-valued, we interpret the Sobolev norm component-wise, i.e. 
\begin{equation}
\|f^{(k)}\|_{H^s_\mu(\Omega)}^2 := \sum_{i=1}^{n_x} \|f^{(k)}_i\|_{H^s_\mu(\Omega)}^2
\end{equation}
where $f^{(k)}_i$ is the $i$th component of $f^{(k)}$. 
\end{remark}
\begin{assumption}[Koopman Regularity]\label{ass:Koopmanregularity}
$\mathcal{K} \in \cB(H^s_\mu(\Omega))$ is a bounded linear operator on $H^s_\mu(\Omega)$ for some $s \geq 2$.
\end{assumption}
For all subsequent analyses, we choose as our dictionary the tensor product of Legendre polynomials with maximum degree $p$ in each coordinate 
$\bpsi(x) := \{ \Phi_{\boldsymbol{\alpha}}(x) \}_{\boldsymbol{\alpha} \in [p]^{n_x}}$. Note that the dimension of $\bpsi$ is $N = (p+1)^{n_x}$. 
\paragraph{Regression lemma} The estimation error in least-squares EDMD algorithms reduces to standard linear regression problems, as seen in \eqref{eqn_min_E_k_F_k} and \eqref{eqn:emploss}. We state a high-probability regression below that will be used in the error analysis of both single-step and multi-step EDMD.

\begin{lemma}[Regression Error]\label{lem:regression}
Let $\{(\bpsi(x_j),y(x_j)\}_{j=1}^M$ be i.i.d.\ samples with $x_j\sim\mu$ and
$|y(x_j)|\le M_y$ almost surely. Assume
\[
\Sigma := \mathbb E[\bpsi(x)\bpsi(x)^\top] = I_N .
\]
Define the population minimizer
\[
\beta_\star := \argmin_{\beta\in\mathbb R^N}
\|y-\beta^\top\bpsi\|_\cH^2,
\]
and the empirical quantities
\[
\widehat\Sigma := \frac{1}{M}\sum_{j=1}^M \bpsi(x_j)\bpsi(x_j)^\top,\qquad
\widehat g := \frac{1}{M}\sum_{j=1}^M \bpsi(x_j)y(x_j),
\]
with estimator $\widehat\beta := \widehat\Sigma^{-1}\widehat g$.
Then, under the uniform norm bound in Lemma~\ref{lemma:normbound}, with
probability at least $1-2M^{-2}$,
\begin{small}
\begin{equation}\label{eq:regression error}
\|y-\widehat\beta^\top\bpsi\|_\cH^2
-
\|y-\beta_\star^\top\bpsi\|_\cH^2
\;\le\;
C_l\,\frac{M_y^2 N^2\log M}{M},
\end{equation}
\end{small}
for a universal constant $C_l>0$.
\end{lemma}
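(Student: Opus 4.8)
The plan is to reduce the population excess risk on the left of \eqref{eq:regression error} to a parameter-estimation error $\|\widehat\beta-\beta_\star\|_2^2$ and then control the latter via the Matrix Bernstein inequality (Theorem~5.4.1 of \cite{Vershynin_2018}). Since the population risk $R(\beta):=\|y-\beta^\top\bpsi\|_\cH^2=\mathbb E[y^2]-2\beta^\top g+\beta^\top\Sigma\beta$, with $g:=\mathbb E[\bpsi y]$, is quadratic with Hessian $2\Sigma=2I_N$, first-order optimality gives $\beta_\star=\Sigma^{-1}g=g$ and the cross term vanishes, yielding the exact identity
\[
\|y-\widehat\beta^\top\bpsi\|_\cH^2-\|y-\beta_\star^\top\bpsi\|_\cH^2=(\widehat\beta-\beta_\star)^\top\Sigma(\widehat\beta-\beta_\star)=\|\widehat\beta-\beta_\star\|_2^2 .
\]
Thus it suffices to bound $\|\widehat\beta-\beta_\star\|_2^2$ with probability at least $1-2M^{-2}$.

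I would then decompose $\widehat\beta-\beta_\star=\widehat\Sigma^{-1}\widehat g-g=\widehat\Sigma^{-1}\big[(\widehat g-g)-(\widehat\Sigma-I)g\big]$, giving
\[
\|\widehat\beta-\beta_\star\|_2\le\|\widehat\Sigma^{-1}\|_{\mathrm{op}}\big(\|\widehat g-g\|_2+\|\widehat\Sigma-I\|_{\mathrm{op}}\,\|g\|_2\big).
\]
Three ingredients feed this bound. First, by Bessel's inequality $\|g\|_2=\|\beta_\star\|_2\le\|y\|_\cH\le M_y$, so no factor beyond $M_y$ enters through $g$. Second, writing $B:=\sup_x\|\bpsi(x)\|_2$ for the uniform dictionary bound of Lemma~\ref{lemma:normbound} (for the Legendre tensor dictionary $B^2\lesssim N^2$), Matrix Bernstein applied to $\widehat\Sigma-I=\frac1M\sum_j(\bpsi(x_j)\bpsi(x_j)^\top-I)$, with per-term range $\le B^2+1$ and matrix variance $\|\sum_j\mathbb E(\bpsi\bpsi^\top-I)^2\|\le MB^2$, yields $\|\widehat\Sigma-I\|_{\mathrm{op}}\lesssim B\sqrt{\log(N)/M}+B^2\log(N)/M$ with probability $\ge1-M^{-2}$. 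Third, the same inequality applied to the rectangular sum $\widehat g-g=\frac1M\sum_j(\bpsi(x_j)y(x_j)-g)$, with range $\lesssim BM_y$ and variance parameters $\|\sum_j\mathbb E[y^2\bpsi\bpsi^\top]\|\le MM_y^2$ and $\sum_j\mathbb E\|y\bpsi\|^2\le MM_y^2N$, yields $\|\widehat g-g\|_2\lesssim M_y\sqrt{N\log(N)/M}+BM_y\log(N)/M$ with probability $\ge1-M^{-2}$.

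I would finish by restricting to the event $\|\widehat\Sigma-I\|_{\mathrm{op}}\le\tfrac12$, which forces $\|\widehat\Sigma^{-1}\|_{\mathrm{op}}\le2$ and holds once $M\gtrsim N^2\log N$ (consistent with the polynomial-decay convention for $\delta$). Taking a union bound over the two Bernstein events gives total failure probability $2M^{-2}$. On this event the leading term of $\|\widehat\beta-\beta_\star\|_2$ is the product $\|\widehat\Sigma-I\|_{\mathrm{op}}\|g\|_2\lesssim M_y B\sqrt{\log(N)/M}$, which carries the factor $B\asymp N$ and therefore dominates $\|\widehat g-g\|_2\lesssim M_y\sqrt{N\log(N)/M}$; squaring yields $\|\widehat\beta-\beta_\star\|_2^2\lesssim M_y^2 N^2\log(NM)/M$. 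Absorbing $\log N\lesssim\log M$ (valid whenever $N$ is at most polynomial in $M$) into the universal constant $C_l$ recovers \eqref{eq:regression error}.

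The main obstacle is the bookkeeping in the two Matrix Bernstein applications: correctly identifying, for both the symmetric matrix $\widehat\Sigma-I$ and the rectangular ``vector'' $\widehat g-g$, the range and matrix-variance parameters in terms of $B$ and $M_y$, and then verifying that it is the $\|\widehat\Sigma-I\|_{\mathrm{op}}\|g\|_2$ contribution (scaling as $B^2\asymp N^2$) rather than $\|\widehat g-g\|_2^2$ (scaling as $N$) that sets the final rate. A secondary point is the regime condition $M\gtrsim N^2\log N$ needed to bound $\|\widehat\Sigma^{-1}\|_{\mathrm{op}}$ and to reconcile the $\log(NM)$ that Bernstein naturally produces with the stated $\log M$.
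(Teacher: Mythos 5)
Your proposal is correct and follows essentially the same route as the paper: reduce the excess risk to $\|\widehat\beta-\beta_\star\|_2^2$ using $\Sigma=I_N$ and $\beta_\star=g$, split $\widehat\beta-\beta_\star$ into a $\|\widehat\Sigma-I\|$ contribution and a $\|\widehat g-g\|$ contribution, control each by the Matrix Bernstein inequality under the dictionary bound $\|\bpsi\|_2\le N$, and union-bound with $\delta=M^{-2}$. Your only deviation is a slightly sharper variance bound for $\widehat g-g$ (of order $M_y^2NM$ rather than the paper's cruder $M_y^2N^2M$), which is immaterial since, as you note, the $\|\widehat\Sigma-I\|\,\|g\|_2$ term of order $M_yN\sqrt{\log M/M}$ sets the final rate either way.
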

\begin{proof}
The proof is deferred to Appendix \ref{app:regression}.
\end{proof}

\paragraph{Single-step EDMD}
We now apply the results in Lemmas \ref{lemma: PCE decay} and \ref{lem:regression} to analyze the approximation error of conventional single-step EDMD. Specifically, combining the bounds in \eqref{eq:pcedecay} and \eqref{eq:regression error} yields the following high-probability bound: 
\begin{theorem}\label{thm:singlestep}
Let $\KopNM$ be the EDMD operator with its matrix representation satisfying  \eqref{eqn:emploss}. Define 
\begin{equation}
B_{f,N} := \max_{l \in [1,N]} \norm{P_N (\psi_l \circ f )}_{L^\infty(\mu)}.
\end{equation}
Then w.h.p., for any $g \in H^s_\mu(\Omega)$ and $0< q \leq \floor{s/2}$,
\begin{align}
\|\Kop g - \KopNM g\|_{ \cH} 
  & \le \ESS(g),
\end{align}
where 
\begin{small}
\begin{align}
\ESS(g) & := C_{\operatorname{ss}} \bigg((p^{-s}\|\Kop\|_{\cH} +p^{q-s-1/2} \|\Kop\|_{H^q_\mu(\Omega)})  \norm{g}_{H^s_\mu(\Omega)} \nonumber \\
&\quad + \,p^{-q}\,\|\Kop\|_{H^q_\mu(\Omega)}\|g\|_{H^q_\mu(\Omega)}\\
&\quad +  \sqrt{\frac{B_{f,N}^2 N^3\log M}{M}} \norm{g}_\cH \bigg),
\label{eqn:ssEDMDerror}
\end{align}
\end{small}
and $C_{\operatorname{ss}} > 0$ is a constant depending only on $s > 0$ and $n_x$. 
\end{theorem}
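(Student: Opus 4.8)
The plan is to establish the bound via a bias--variance decomposition, inserting the population projection $\KopN = \PN \Kop \PN$ as an intermediate operator. Writing
\[
\Kop g - \KopNM g = \underbrace{(\Kop - \KopN)g}_{\text{bias}} + \underbrace{(\KopN - \KopNM)g}_{\text{variance}},
\]
the triangle inequality reduces the task to bounding each piece separately. The three deterministic terms in $\ESS(g)$ will arise from the bias, and the stochastic $\sqrt{B_{f,N}^2 N^3 \log M / M}$ term from the variance.

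For the bias, I would expand $(\Kop - \KopN)g = (I-\PN)\Kop g + \PN\Kop(I-\PN)g$ and further split the first summand using $g = \PN g + (I-\PN)g$. The term $\PN\Kop(I-\PN)g$, together with $(I-\PN)\Kop(I-\PN)g$, is controlled by $\norm{\Kop}_{\cH}\norm{(I-\PN)g}_{\cH}$, and Lemma~\ref{lemma: PCE decay} applied to $g\in H^s_\mu(\Omega)$ with target order $0$ (exponent $e(0,s)=-s$) yields the first term $p^{-s}\norm{\Kop}_{\cH}\norm{g}_{H^s_\mu(\Omega)}$. The remaining piece $(I-\PN)\Kop\PN g$ is the delicate one: since $\PN g$ is a polynomial, $\Kop\PN g \in H^q_\mu(\Omega)$, so applying Lemma~\ref{lemma: PCE decay} to $\Kop\PN g$ with target order $0$ (exponent $e(0,q)=-q$) gives $\norm{(I-\PN)\Kop\PN g}_{\cH}\le C p^{-q}\norm{\Kop}_{H^q_\mu(\Omega)}\norm{\PN g}_{H^q_\mu(\Omega)}$. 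Bounding $\norm{\PN g}_{H^q_\mu(\Omega)}\le \norm{g}_{H^q_\mu(\Omega)} + \norm{(I-\PN)g}_{H^q_\mu(\Omega)}$ and invoking Lemma~\ref{lemma: PCE decay} once more (now with target order $q$, exponent $e(q,s)=2q-s-\tfrac{1}{2}$) produces exactly the two remaining deterministic terms $p^{-q}\norm{\Kop}_{H^q_\mu(\Omega)}\norm{g}_{H^q_\mu(\Omega)}$ and $p^{q-s-1/2}\norm{\Kop}_{H^q_\mu(\Omega)}\norm{g}_{H^s_\mu(\Omega)}$. Here the constraint $q\le\floor{s/2}$ is precisely what Lemma~\ref{lemma: PCE decay} requires for the $H^q_\mu(\Omega)$-norm estimate, and finiteness of $\norm{\Kop}_{H^q_\mu(\Omega)}$ follows by interpolating Assumptions~\ref{assum:bounded} and~\ref{ass:Koopmanregularity}.

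For the variance, I would exploit that $\KopN$ and $\KopNM$ share the factor $\Kop\PN$, so their difference annihilates $(I-\PN)g$ and, in the orthonormal Legendre coordinates, acts on the coefficient vector of $\PN g$ through $A_N - A_{N,M}$ (or its transpose, immaterial for the operator norm). Hence $\norm{(\KopN-\KopNM)g}_{\cH}\le \norm{A_N - A_{N,M}}_{\mathrm{op}}\norm{g}_{\cH}$. Because the empirical loss \eqref{eqn:emploss} decouples across output coordinates, each row of $A_{N,M}$ is the least-squares estimator for the scalar regression with target $\psi_l\circ f$, whose population minimizer coincides with the corresponding row of $A_N$; since the Legendre basis is orthonormal we have $\Sigma=I_N$, so the excess risk in Lemma~\ref{lem:regression} equals the squared coefficient error $\norm{(A_{N,M}-A_N)_{l,:}}_2^2$, with signal magnitude captured by $B_{f,N}$. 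Applying Lemma~\ref{lem:regression} to each of the $N$ rows, taking a union bound (which only degrades the failure probability to $2NM^{-2}$, still polynomial in $M$), and summing gives $\norm{A_N-A_{N,M}}_F^2 \le C_l B_{f,N}^2 N^3\log M / M$; the estimate $\norm{\cdot}_{\mathrm{op}}\le\norm{\cdot}_F$ then yields the stochastic term.

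I expect the main obstacle to be the bias analysis of $(I-\PN)\Kop\PN g$: the nontrivial idea is to route the estimate through the intermediate regularity $H^q_\mu(\Omega)$ rather than through $H^s_\mu(\Omega)$ directly, so that $\norm{\Kop}_{H^q_\mu(\Omega)}$ (rather than the potentially larger $\norm{\Kop}_{H^s_\mu(\Omega)}$) appears, and to keep track of the two distinct PCE decay exponents $e(0,q)=-q$ and $e(q,s)=2q-s-\tfrac{1}{2}$ that the nested projections generate. Assembling the four contributions under a single union bound and absorbing all dimension-dependent prefactors into one constant $C_{\operatorname{ss}}=C_{\operatorname{ss}}(s,n_x)$ completes the argument.
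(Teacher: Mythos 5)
Your proposal is correct and follows essentially the same route as the paper's proof: the same three-way split into an approximation term $\Kop(I-\PN)g$ bounded via Lemma~\ref{lemma: PCE decay} at order $0$, a bias term $(I-\PN)\Kop\PN g$ routed through the $H^q_\mu(\Omega)$ estimate with the two exponents $e(0,q)$ and $e(q,s)$, and a variance term handled by applying Lemma~\ref{lem:regression} to each basis function with signal bound $B_{f,N}$ and aggregating (your Frobenius-norm aggregation is the same computation as the paper's Cauchy--Schwarz over the coefficient expansion, yielding the identical $N^3$ factor). The only cosmetic difference is that you insert $\KopN$ as a single intermediate and regroup, whereas the paper first peels off $\Kop(g-\PN g)$ using $\KopNM g=\KopNM \PN g$; these are algebraically equivalent.
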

\begin{proof}
The proof is deferred to Appendix \ref{app:singlestep}. The argument is similar to the multi-step case with differing minor technical details.
\end{proof}
 Equipped with the error bound \eqref{eqn:ssEDMDerror} for single-step EDMD, we can then analyze the $L^2$ error when the single-step EDMD operator is applied recursively to predict future states. 
Specifically, for any $g \in H^s_\mu(\Omega)$ and horizon step $k \in [1,H]$, we have the telescoping identity:
{\small
\begin{align*}
\|\Kop^kg - \KopNM^kg\|_{\cH} &= \left\|\sum_{\ell=0}^{k-1} \KopNM^\ell (\KopNM -\Kop)\Kop^{k-1-\ell}g  \right\|_{\cH} \\
&\leq \sum_{\ell=0}^{k-1} \norm{\KopNM}_{ \cH}^\ell \|(\KopNM - \Kop)\Kop^{k-1-\ell}g\|_{\cH}.
\end{align*}
}
By the single-step EDMD error bound in Thm.\ \ref{thm:singlestep}, we have w.h.p., 
\begin{equation}
\|(\KopNM - \Kop)\Kop^{k-1-\ell}g\|_{\cH} \leq \ESS(\Kop^{k-1-\ell}g).
\end{equation}
Hence, 
\begin{equation}\label{eq:mserrorgeneral}
\|\Kop^kg - \KopNM^kg\|_{\cH} \leq \sum_{\ell=0}^{k-1} \norm{\KopNM}_{ \cH}^\ell\ESS(\Kop^{k-1-\ell}g),
\end{equation}
which yields the desired multi-step error decomposition for conventional single-step EDMD. Applying the multi-step error decomposition \eqref{eq:mserrorgeneral} to the coordinate observables $g_i(x) = x^i$, since $\Kop^k g_i = f^{(k)}_i(x)$ and $\KopNM^k g_i = a_{g_i}^\top A_{N,M}^k \bpsi(x)$, we obtain 
\begin{equation}\label{eqn:coordinatessedmd}
\|f^{(k)}_i - a_{g_i}^\top A_{N,M}^k \bpsi\|_{\cH} \leq \sum_{\ell=0}^{k-1} \norm{\KopNM}_{ \cH}^\ell \ESS(f^{(k-1-\ell)}_i).
\end{equation}
This decomposition explicitly highlights the accumulation of error across the prediction horizon. Specifically, if $\norm{\KopNM}_{ \cH} = \norm{A_{N,M}}_2 > 1$, the errors at earlier time steps $\ESS(f^{(k-1-\ell)}_i)$ are amplified by the factor $\norm{\KopNM}_{ \cH}^\ell$. This motivates the formulation of multi-step condensed EDMD over a finite prediction horizon. 

\paragraph{Multi-step condensed EDMD} We now extend the operator-theoretic approximation framework for single-step EDMD to multi-step EDMD. Rather than considering the Koopman operator $\Kop$ itself, we consider a finite-horizon Koopman map that encodes the time evolution of a given observable $g \in \cH$ over the entire prediction horizon. Specifically, we define a multi-step Koopman map $\cT^{H} : \cH \to \cH ^H $,
\begin{equation}
\cT^{H} g := (\Kop g, \Kop^2g, \cdots{}, \Kop^Hg).
\end{equation}
Under Assumption \ref{assum:bounded}, $\cT^{H} \in \cB(\cH, \cH^{H})$ is a bounded linear operator, as the projection to each component is bounded and there are finitely many components. We define the block-diagonal projection operator $P_{N}^H :\cH^{H} \to \cF_{N}^{H} $ by direct-sum extension:
\begin{equation}
P_{N}^H := \operatorname{diag}[\underbrace{P_{N},P_{N},\cdots{},P_{N}}_{H \text{ times}}].
\end{equation}
The projection of the multi-step Koopman map onto $\FN^H$ is
\begin{equation}
\cT_N^H:= P_{N}^H \cT^H \big|_{\mathcal{F}_N }.
\end{equation}
When $\cT_N^H$ is viewed as an operator on $\cH$, we implicitly identify $\cT_N^H$ with its natural extension $P_{N}^H\cT P_{N}$. Since $P_{N}^H \to I$ strongly and $\Kop$ is a bounded linear operator, it follows that $\cT_{N}^H \to \cT^H$ in the strong operator topology. Additionally, for any $g \in \cF_{N}$, we have the variational formulation
\begin{equation}
\cT_{N}^H g = \argmin_{h \in \cF_N^H}  \norm{\cT^Hg - h}.
\end{equation}
Define the multi-step output operator $\cC : \FN^H \to (\mathbb{R}^{n_x})^{H} $ by 
\begin{equation}
\cC^H := \operatorname{diag}[\underbrace{\cC,\cC,\cdots{},\cC}_{H \text{ times}}].
\end{equation}
The condensed multi-step Koopman operator $\cE^{H}_{N} : \cF_{N} \to (\mathbb{R}^{n_x})^{H}$ is then defined by 
\begin{equation}
\cE^{H}_{N}g :=  \cC^H \cT_{N}^H g = (\cC P_N\Kop g,  \cdots{}, \cC P_N \Kop^Hg).
\end{equation}
Analagous to the single-step EDMD case, we are given samples of trajectories $\{(x_{j,0}, X_j)\}_{j=0}^{M_m}$, where $x_{j,0}$ are i.i.d samples from $\mu$ and $X_j := [x_{j,1}, \cdots{}, x_{j,H}]$ is obtained by applying the dynamics $x_{j,k+1} = f(x_{j,k})$. With the same empirical measure $\hat{\mu}_{M_m} :=  \sum_{j=1}^{M_m} \delta_{x_j}$ and empirical inner product defined in \eqref{eqn_emp_inner} with $M_m$ samples, define the block-diagonal projection operator $P_{N,M_m}^H :\cH^{H} \to \cF_{N}^{H} $ by direct sum extension: 
\begin{equation}
P_{N,M_m}^H := \operatorname{diag}[\underbrace{P_{N,M_m},\cdots{},P_{N,M_m}}_{H \text{ times}}]. 
\end{equation}
We can define the empirical multi-step EDMD operator analogously to 
\begin{equation}
\cT_{N,M_m} := P^{H}_{N,M_m} \cT^H \big|_{\mathcal{F}_N },
\end{equation}
and identify $\cT_{N,M_m}$ with its implicitly defined natural extension $P^{H}_{N,M_m}\cT P_{N}$ on $\cH$. The condensed multi-step EDMD operator $\cE^{H}_{N,M_m}: \FN \to (\mathbb{R}^{n_x})^H$ is then defined as 
\begin{equation}
\cE^{H}_{N,M_m}g :=  \cC^H \cT^H_{N,M_m}  g = (\cC P_{N,M_m}\Kop g,  \cdots{}, \cC P_{N,M_m} \Kop^Hg)
\end{equation}

Since $\cF_{N}$ is finite dimensional, the almost sure convergence of $\cT_{N,M_m}^H g \to \cT_{N}^Hg $ for any $g \in \cF_{N}$ as $M_m \to \infty$ implies that $\cE_{N,M_m}^H\to \cE_{N}^H$ almost surely in the operator norm topology:
\begin{equation}
\lim_{M_m \to \infty} \norm{\cE^{H}_{N,M_m} - \cE_{N}^{H}}_{\cF_{N}\to (\mathbb{R}^{n_x})^{H} } = 0.
\end{equation}
With the same $\bpsi(x) = [\psi_1(x), \cdots{}, \psi_{N}(x)]^\top $ vector with observable entries from $\cF_{N}$, multi-step EDMD constructs the approximation $\cE_{N,M_m}^{H} $ by solving the multi-step empirical trajectory loss problem,
\begin{equation}\label{eqn:emplosstraj}
\bE_{N,M_m} = \argmin_{\bE \in\mathbb{R}^{n_x H \times N}} \frac{1}{M_m} \sum_{j=1}^{M_{m}} \bigg\lVert X_j -  \bE \bpsi(x_{j,0})\bigg\rVert_2^2, 
\end{equation}
where $\bE_{N,M_m}$ is the matrix representation of $\cE^{H}_{N,M_m}$ on the basis $\cF_N$.
\begin{remark}
    Reiterating Remark~\ref{rmk:parallel-sec3}, we again emphasize that, due to the parallel property of matrix-vector multiplication, the EDMD trajectory loss problem in \eqref{eqn:emplosstraj} can be decomposed into the  parallelizable subproblems 
    \begin{equation}\label{eqn_min_E_k_i}
\begin{aligned}
    &\textbf{for}~i=1:n_x,\ k = 1:H \textbf{ (in parallel)}\\
    &\quad \min_{E_{k,i}}
    \frac{1}{M_m} \sum_{j=1}^{M_{m}}\left\|x_{j,k}^i- E_{k,i}^\top  \psi(x_{j,0})\right\|_2^2\\
    &\textbf{end}
\end{aligned}
\end{equation}
\end{remark}

\paragraph{Error bound for multi-step condensed EDMD}
Following the same reasoning as the single-step case, the results in Lemmas \ref{lemma: PCE decay} and \ref{lem:regression} can be similarly applied to analyze the approximation error of multi-step EDMD and compare it with \eqref{eqn:ssEDMDerror}. In particular, combining the bounds in \eqref{eq:pcedecay} and \eqref{eq:regression error} yields the following high-probability bound:
\begin{theorem}[Multi-step EDMD Error]\label{thm:multistep_error}
Let $\bE_{N,M_m}=[E_1^\top,\dots,E_H^\top]^\top$ be the multi-step EDMD matrix obtained
from~\eqref{eqn:emplosstraj}. Under Assumption~\ref{ass:regularity}, for each $i\in [1,n_x]$, $k\in[1,H]$, w.h.p., the $L^2$ error of multi-step EDMD satisfies 
\begin{equation}
\|f^{(k)}_i -E_{k,i}^\top\bpsi\|_\cH
\le
C_{\operatorname{ms}}\!\left( p^{-s}\|f_i^{(k)}\|_{H^s_\mu(\Omega)}
+\sqrt{\frac{N^2\log M_m}{M_m}}
\right)
\label{eqn:multistep_error}
\end{equation}
where $C_{\operatorname{ms}}>0$ is a constant depending only on $s>0$ and $n_x$. 
\end{theorem}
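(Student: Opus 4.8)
The key structural feature to exploit is that the per-coordinate, per-horizon subproblem \eqref{eqn_min_E_k_i} is precisely an ordinary linear regression of the scalar response $y := f^{(k)}_i$ onto the fixed feature vector $\bpsi$, with \emph{no} recursive composition of an intermediate operator. The plan is therefore to invoke Lemma~\ref{lem:regression} to control the estimation (variance) error, and Lemma~\ref{lemma: PCE decay} to control the approximation (bias) error, and then combine the two. Because the regression targets $f^{(k)}_i$ directly, the bias term will see only the regularity of the true $k$-step map, which is what ultimately prevents any horizon-dependent amplification.

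First I would verify that the hypotheses of Lemma~\ref{lem:regression} hold for the regression $y = f^{(k)}_i$. Since the dictionary $\bpsi = \{\Phi_{\boldsymbol{\alpha}}\}_{\boldsymbol{\alpha}\in[p]^{n_x}}$ consists of tensorized Legendre polynomials orthonormal in $\cH = L^2(\mu)$, the population Gram matrix satisfies $\Sigma = \mathbb{E}[\bpsi\bpsi^\top] = I_N$ as required. Moreover, since $\Omega = [-1,1]^{n_x}$ is forward-invariant under $f$, every response obeys $|y(x_{j,0})| = |f^{(k)}_i(x_{j,0})| \le 1$, so the boundedness constant may be taken as $M_y = 1$. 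The empirical minimizer $E_{k,i}$ coincides with the estimator $\widehat\beta$ of the lemma, while the population minimizer $\beta_\star$ satisfies $\beta_\star^\top\bpsi = P_N f^{(k)}_i$, the orthogonal $L^2$-projection of $f^{(k)}_i$ onto $\FN$.

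Next I would apply Lemma~\ref{lem:regression}, which yields, with probability at least $1 - 2M_m^{-2}$, the excess-risk bound
\begin{equation*}
\|f^{(k)}_i - E_{k,i}^\top\bpsi\|_\cH^2 - \|f^{(k)}_i - P_N f^{(k)}_i\|_\cH^2 \le C_l\,\frac{N^2\log M_m}{M_m}.
\end{equation*}
The residual $\|f^{(k)}_i - P_N f^{(k)}_i\|_\cH$ is exactly the truncated-PCE projection error, so applying Lemma~\ref{lemma: PCE decay} with $q = 0$ (so that $e(0,s) = -s$) and using Assumption~\ref{ass:regularity} to guarantee $f^{(k)}_i\in H^s_\mu(\Omega)$ gives
\begin{equation*}
\|f^{(k)}_i - P_N f^{(k)}_i\|_\cH \le C_{s,n_x}\,p^{-s}\,\|f^{(k)}_i\|_{H^s_\mu(\Omega)}.
\end{equation*}
Substituting this into the excess-risk bound and taking square roots (via $\sqrt{a+b}\le\sqrt{a}+\sqrt{b}$) produces the two-term estimate \eqref{eqn:multistep_error} with $C_{\operatorname{ms}} := \max\{C_{s,n_x}, \sqrt{C_l}\}$; a union bound over the $n_x H$ subproblems preserves the high-probability guarantee, since the failure probability $2 n_x H M_m^{-2}$ still decays polynomially in $M_m$.

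The essential point — and in a sense the only conceptual obstacle — is that the resulting bound contains no factor growing with the horizon $k$: the error depends solely on $\|f^{(k)}_i\|_{H^s_\mu(\Omega)}$, in sharp contrast to the telescoping estimate \eqref{eqn:coordinatessedmd} for single-step EDMD, where the factor $\|\KopNM\|_\cH^\ell$ can amplify earlier errors geometrically. The remaining technical care lies in confirming the normalization $\Sigma = I_N$ and the response bound $M_y = 1$, and in correctly identifying $\beta_\star^\top\bpsi$ with the $L^2$-projection $P_N f^{(k)}_i$, so that the regression excess risk translates cleanly into the squared $\cH$-error gap used above.
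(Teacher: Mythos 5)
Your proposal is correct and follows essentially the same route as the paper's own proof: bound the bias $\|f^{(k)}_i - P_N f^{(k)}_i\|_\cH$ via Lemma~\ref{lemma: PCE decay} with $q=0$, bound the excess risk via Lemma~\ref{lem:regression} with $M_y = 1$, and combine with $\sqrt{a+b}\le\sqrt a+\sqrt b$. Your explicit verification of the lemma hypotheses and the union bound over the $n_x H$ subproblems are welcome additions that the paper leaves implicit.
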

\begin{proof}
Let $E_{k,i}^\star$ denote the parameter that minimizes the population $L^2$ error
$$\operatorname{argmin}_{\beta \in \mathbb{R}^{N }}\|f^{(k)}_i -\beta^\top\bpsi\|_\cH^2.$$
 By observing that the optimal coefficients to minimize the $L^2$ error are the PCE coefficients in Lemma \ref{lemma: PCE decay}, we have 
$$\|f^{(k)}_i -{E_{k,i}^\star}^\top\bpsi\|_\cH \leq C_{s,n_x}p^{-s}\norm{f^{(k)}_i}_{H^s_\mu(\Omega)}.$$
Let $E_{k,i}$ denote the $i$th row of $E_k$. Then $E_{k,i}$ minimizes the empirical $L^2$ loss problem \eqref{eqn_min_E_k_i}. By Lemma \ref{lem:regression}, since $ \vert f_i^{(k)} \vert \leq 1 $ as $\Omega = [-1,1]^{n_x}$, we obtain w.h.p., 
\begin{equation}
\|f^{(k)}_i -{E_{k,i}^\star}^\top\bpsi\|_\cH^2
-
\|f^{(k)}_i -E_{k,i}^\top\bpsi\|_\cH^2
\le
C_l\,\frac{N^2\log M_m}{M_m},
\end{equation}
which implies that
\begin{small}
\begin{align*}
\|f^{(k)}_i -{E_{k,i}^\star}^\top\bpsi\|_\cH
&\leq  \sqrt{C_{s,n_x}^2p^{-2s}\norm{f^{(k)}_i}^2_{H^s_\mu(\Omega)} + C_l\,\frac{N^2\log M_m}{M_m}} \\
&\leq C_{s,n_x}p^{-s}\norm{f^{(k)}_i}_{H^s_\mu(\Omega)} +   \sqrt{C_l\,\frac{N^2\log M_m}{M_m}}
\end{align*}
\end{small}
which completes the proof. 
\end{proof}

\subsection{Comparison of single-step and multi-step EDMD}
The error bounds \eqref{eqn:coordinatessedmd} and \eqref{eqn:multistep_error} show that learning the condensed formulation in multi-step EDMD fundamentally alters the way that the learning error propagates through the prediction horizon. Specifically, in conventional single-step EDMD, \eqref{eqn:coordinatessedmd} shows that the learning error accumulates due to the repeated composition of the approximate Koopman operator $\KopNM$. This is reflected by the factor $\norm{\KopNM}_{ \cH}^\ell$ that amplifies the single-step error $\ESS(f^{(k-1-\ell)}_i)$ in \eqref{eqn:coordinatessedmd}. In the event that $\norm{\KopNM}_{ \cH} > 1$ or the spectral radius $\rho(A_{N,M}) > 1$, even low single-step errors can lead to significant degradation in long-horizon predictions as a result of this amplification. This phenomenon is also observed in our case studies in Section \ref{sec: numerical experiments} involving two different nonlinear oscillators.

On the other hand, the condensed multi-step EDMD operator directly approximates the input-output mapping $f^{(k)}$, thereby avoiding the artificial accumulation of error due to repeated composition of the learned operator. As a result, the error bound in \eqref{eqn:multistep_error} depends only on the approximation quality of the multi-step map $f^{(k)}$ and the number of data trajectories $M_m$, which does not degrade with the length of the prediction horizon. This fundamental structural difference in the learning method explains the improved performance of multi-step EDMD in practice and motivates the usage of multi-step EDMD in long-horizon prediction and MPC. 

It is important to note that in the multi-step EDMD formulation, each data trajectory effectively contributes one training sample for learning the condensed matrices. This reflects a fundamental tradeoff between expressivity and variance, since multi-step EDMD reduces error accumulation across the prediction horizon at the cost of increased data requirements. In applications where a dynamic model of the nonlinear system is available, this is typically not restrictive, as data can be generated at a low cost.

\section{Practical Guidelines}\label{sec: practical}
The preceding sections and theoretical analyses provide insights into practical guidelines for multi-step EDMD applications in MPC.
\\\
\paragraph{\textbf{Offline learning, Online MPC}} 
We reiterate that the multi-step EDMD operator is learned offline before integration with MPC. During MPC, the observables $z_0 = \bpsi(x_0)$ is only computed once per Condensed Koopman-MPC problem \eqref{eq: condensed MPC}. This substantially reduces the online computational burden of having a high-dimensional Koopman lifting $N$.   

\paragraph{\textbf{Pre-computation in Parallelized Subproblems}} The multi-step EDMD learning process can be further accelerated by parallelizing the multi-step prediction error minimization problem \eqref{eqn_min_E_k_i_F_k_i}. Specifically, we can write \eqref{eqn_min_E_k_i_F_k_i} as 
\begin{equation}\label{eqn_minimization}
\begin{aligned}
    \min_{E_{k,i}, F_{k,i}} f_{k,i}(E_{k,i},F_{k,i})=
    \left\|\mathbf{G}E_{k,i} + \mathbf{H}_kF_{k,i}-h_k^i\right\|_2^2 
\end{aligned}
\end{equation}
where 
\begin{small}
\begin{equation}\label{eqn:GH_def}
\mathbf{G} := \!\begin{bmatrix} \psi(x_{1,0})^\top \\ \vdots \\ \psi(x_{M_m,0})^\top \end{bmatrix}\!, \quad
\mathbf{H}_k := \!\begin{bmatrix} u_{1,0}^\top & \cdots & u_{1,k-1}^\top \\ \vdots & & \vdots \\ u_{M_m,0}^\top & \cdots & u_{M_m,k-1}^\top \end{bmatrix}\!,
\end{equation}
\end{small}
and
\begin{equation}\label{eqn:h_def}
h_k^i := \begin{bmatrix} x_{1,k}^i & \cdots & x_{M_m,k}^i \end{bmatrix}^\top\!.
\end{equation}
Importantly, $\mathbf{G}$ can be pre-computed only once and reused for every horizon, and $\mathbf{H}_k$ can be pre-computed once per horizon for every state coordinate. This can accelerate the learning process when the number of trajectories $M_m$ is very large. 

\paragraph{\textbf{Elastic net regularization and pruning}}
Our theoretical analysis assumed that the observables $\{\psi_i\}_{i=1}^N$ are orthonormal with respect to the sampled distribution $\mu$. In practice, this assumption could be violated in higher-dimensional settings such as discretizations of PDE systems, where spatially neighboring states are strongly correlated. As a result, the empirical $L^2$ error regression problems may be highly ill-conditioned, which may lead to numerical instability during the learning process. To mitigate this issue, we can employ elastic-net regularization when learning the multi-step EDMD operators. Formally, the regularized problem is 
\begin{equation}\label{eqn_sparse_reg_minimization}
\begin{aligned}
    & \min_{E_{k,i}, F_{k,i}} f_{k,i}(E_{k,i},F_{k,i})=\\
    & \left\|\mathbf{G}E_{k,i} + \mathbf{H}_kF_{k,i}-h_k^i\right\|_2^2 + \beta\left\|\left[\begin{array}{c}
         E_{k,i}\\
         F_{k,i}  
    \end{array}\right]\right\|_2^2 + \tau \|E_{k,i}\|_1    
\end{aligned}
\end{equation}
where $\beta>0$ and $\tau>0$ are $\ell_2$ and $\ell_1$ regularization parameters respectively. The $\ell_2$ regularization improves numerical conditioning, while the $\ell_1$ regularization promotes sparsity of the resulting multi-step EDMD model \cite{sun2020alven,zou2005regularization}. Additionally, an optional pruning step can be performed to remove observables with negligible contribution to the learned dynamics. This step improves interpretability and reduces computation time for the multi-step EDMD operator while preserving prediction accuracy. An example of a straightforward multi-step EDMD learning algorithm with elastic net regularization and pruning is provided in Algorithm \ref{algo:L1-train} in Appendix \ref{app:pruning}.

Importantly, these regularization and pruning steps are optional and do not structurally change the multi-step EDMD learning method for MPC. They serve as add-ons to improve the numerical stability and scaling performance of the multi-step EDMD framework.

\paragraph{\textbf{Model complexity and data availability}} 
The error bound in \eqref{eqn:multistep_error} explicitly depends on the observable dimensionality $N$ and the number of data trajectories $M_m$. While increasing $N$ decreases the projection error (bias), it increases the regression error (variance), which is a concrete example of a bias-variance tradeoff. Consequently, the multi-step EDMD model performance requires balancing model expressivity with data availability, ideally with $M_m \gg N^2$. Importantly, when a dynamic model of the nonlinear system is available, multi-step EDMD is particularly effective since data trajectories can be generated readily, and data scarcity is not a limiting factor.

\section{Numerical Experiments}\label{sec: numerical experiments}
This section evaluates our proposed multi-step Koopman MPC framework in two benchmark numerical case studies of two highly nonlinear oscillators: Van der Pol and Duffing. These two illustrative examples provide insight into the performance of our framework across different dynamical regimes. For each nonlinear oscillator, we compare the performance of three different predictors: 
\begin{enumerate}
\item One-step EDMD
\item Multi-step Koopman Predictor 
\item Multi-step Koopman Predictor with pruning
\end{enumerate}
These two case studies demonstrate the efficacy of our multi-step Koopman framework in two distinct dynamical regimes. All numerical results can be reproduced with the code found in the \url{https://github.com/SOLARIS-JHU/Multi-step-Koopman}.

\subsection{Van der Pol oscillator}
Consider the Van Der Pol Oscillator with forced dynamics, 
\begin{equation}
\begin{aligned}
\dot{x}_1 &= x_2,\\ 
\dot{x}_2 &= \mu (1-x_1^2) x_2 - \omega^2_0 x_1 + u,
\end{aligned}
\end{equation}
where $x := [x_1,x_2]^\top$ is the state vector, $\mu = 5$ is a damping parameter, and $\omega_0 = 0.8$ is the natural frequency of the oscillator. For these simulations, a relatively large value for $\mu$ was chosen so that the oscillator exhibits strong nonlinear behavior with different time scales in the charging and discharging phases.

\begin{figure}[!htbp]
  \centering
  \includegraphics[width=0.80 \linewidth]{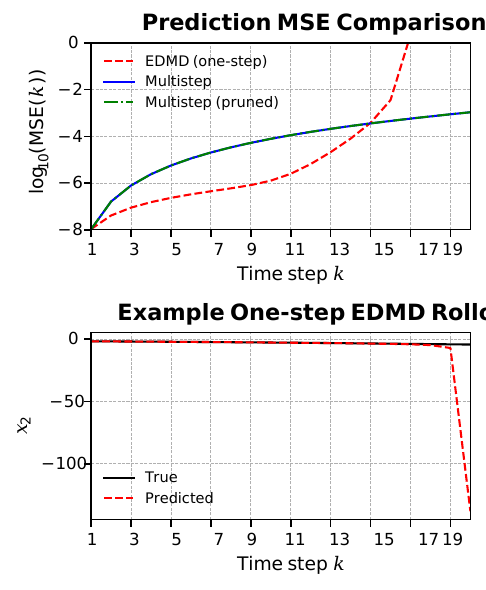}

  \vspace{-0.5cm}
  
  \caption{Top: Prediction MSE comparison across different models for the van der Pol oscillator. Bottom: Example $x_2$ trajectory prediction which one-step EDMD, which diverges over longer horizons.}
  \label{fig:openloopvdp}
\end{figure}

\subsubsection{Open-loop performance}
To obtain the dataset for training each predictor, $M_m= 2$$\times$$10^5$ short trajectories of the oscillator were simulated, each with $H = 20$ time steps of length $T_S = 0.01$, where $H$ denotes the length of the prediction horizon. The initial states $x_0$ were uniformly sampled in the box $[-2,2]^2 \subset \mathbb{R}^2$. The oscillator was simulated using the fourth-order Runge-Kutta method. The control trajectories were sampled using a pseudo-random binary sequence with values in $\pm  U_\text{amp} =\pm  0.5 $, to ensure sufficient excitation of the oscillator. We used an observable map $\psi$ consisting of the (scaled) Legendre polynomials of total degree $d \leq 10$, which results in a lifted space dimension of $N = 66$. The threshold weight for pruning the multi-step model was set to $\epsilon_E = 10^{-3}$. The metric used to compare the different predictors is the mean-squared error (MSE), which is defined for various timesteps $k$ as 
\begin{equation}\label{eq: mse}
\text{MSE}(k) = \frac{1}{M_m} \sum_{j=1}^{M_m} \;\big\lVert x_\text{true}(kT_s) - x_\text{pred} (kT_s)\big\rVert^2.
\end{equation}

Fig. \ref{fig:openloopvdp} shows the mean-squared error at each timestep $k$ for various predictors. Our multi-step Koopman predictors are far more accurate than conventional EDMD, especially in the later parts of the prediction horizon, where the MSE for conventional EDMD diverges to infinity. This phenomenon in conventional EDMD is attributed to the exponential growth of the spectral radius of $A_{N,M}^k$ ($\rho(A_{N,M}) = 1.38$), whereas the multi-step Koopman model mitigates this effect by capturing the nonlinear dynamics over longer time intervals. Some sample trajectories illustrating this phenomenon are also shown in Fig. \ref{fig:openloopvdp}. The variant with pruning yields nearly identical MSE while reducing the dimensionality of the lift to $N = 25$, which shows that redundant observables can be removed without significant loss of performance. 

\subsubsection{Closed-loop performance}
The conventional one-step Koopman model exhibited numerical instability during open-loop prediction, which renders it completely unsuitable for integration with MPC for closed-loop evaluation. This highlights a fundamental limitation of one-step Koopman models under dynamic regimes with strong nonlinearities. 

In contrast, the multi-step Koopman models remained numerically stable during prediction, and we integrated the models with MPC to regulate the oscillator to the origin. At each time step, we solve the Condensed Koopman-MPC problem \eqref{eq: condensed MPC}. Specifically, at each time step, we solve the MPC problem \eqref{eq: condensed MPC} with the settings $u_{\min}  = -10$, $u_{\max} = 10$, $Q = I_N$, and $R = 0.01 I_m$, and no state constraints. Fig. \ref{fig:closedloopvdp} shows the state trajectories for both multi-step models and the control trajectories. The resulting controller was able to stabilize the oscillator gradually while ensuring that input constraints were satisfied.

\begin{figure}[!htbp]
  \centering
  \includegraphics[width=0.90 \linewidth]{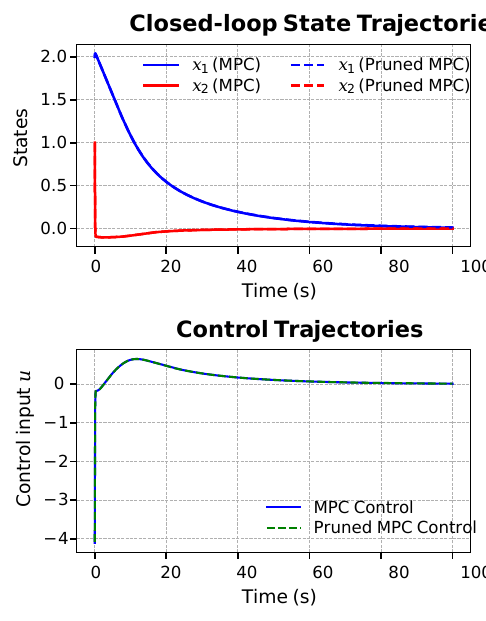}
  
  \vspace{-0.5cm}
  
  \caption{Closed-loop state (top) and control (bottom) trajectories for the van der Pol oscillator.}
  \label{fig:closedloopvdp}
\end{figure}

\subsection{Duffing Oscillator}
We consider another canonical case study involving the Duffing oscillator. The dynamics of the Duffing oscillator are given by 
\begin{equation}
\begin{aligned}
\dot{x}_1 &= x_2,\\ 
\dot{x}_2 &= -\delta x_2 - \alpha x_1 - \beta x_1^3+ u,
\end{aligned}
\end{equation}
where $x:=[x_1,x_2]^\top \in \mathbb{R}^2$ is the state vector, $u\in \mathbb{R}$ is the control action, and the parameters $\delta = 0.2, \alpha = -1, \beta = 1$. For this set of parameters, the Duffing oscillator exhibits highly nonlinear restoring forces with multiple equilibrium basins. Notably, it can be shown that for the autonomous case $u=0$, $x = [0,0]^\top$ is an unstable equilibrium point, whereas $x = [\pm 1,0]^\top$ are stable equilibrium points. The unstable equilibrium point admits a separatrix (stable manifold) that divides the state space into two basins of attractions for the two stable equilibria. Consequently, long-horizon predictions starting the near the separatrix are particularly challenging. See Fig. \ref{fig:phaseportraitduffing} for an illustration. 
\subsubsection{Open-loop performance}
Similar to the previous case study, to obtain the dataset for training each predictor, $M_m= 2$$\times$$10^3$ long trajectories of the oscillator were simulated, each with $H = 50$ time steps of length $T_S = 0.025$, where $H$ denotes the length of the prediction horizon. We used an observable map $\psi$ consisting of the (scaled) Legendre polynomials of total degree $d \leq 14$, which results in a lifted space dimension of $N = 120$. The oscillator was simulated using the fourth-order Runge-Kutta method. The control trajectories were sampled using a pseudo-random binary sequence with values in $\pm  U_\text{amp} =\pm  1 $ to ensure sufficient excitation. For the pruned multi-step model, the threshold weight was set to $\epsilon_E = 10^{-2}$. The sampling method for $x_0$, and the metric for model comparison for various timesteps are the same as the ones in the previous case study. 

Fig. \ref{fig:openloopduffing} shows the mean-squared error at each timestep $k$ for various predictors. We observe the same phenomenon whereby the single-step EDMD predictors diverge after $k = 20$ timesteps, whereas the predictions made by multi-step EDMD remain stable for longer horizons. This phenomenon is again attributed to the exponential growth of the spectral radius of $A_{N,M}^k$ ($\rho(A_{N,M}) = 2.5$) for single-step EDMD, which amplifies prediction errors made in earlier timesteps. Furthermore, the variant with pruning yields nearly identical MSE while reducing the dimensionality of the lift from $N=120$ to $N = 13$, which shows that redundant observables can be removed without significant loss of performance. 

\subsubsection{Closed-loop performance}
Similar to the previous case study, the conventional one-step Koopman model exhibited numerical instability during open-loop prediction for longer horizons; that is, single-step EDMD-MPC fails if choosing longer horizons. To enable a fair and well-posed comparison with our multi-step models in MPC integration, we truncated the prediction horizon for single-step EDMD to $H = 10$. 

At each time step, we solve the Condensed Koopman-MPC problem  \eqref{eq: condensed MPC}, with the settings $u_{\min}  = -1$, $u_{\max} = 1$, $Q = I_N$, and $R = 0.01 I_m$, and no state constraints. Fig. \ref{fig:closedloopduffing} shows the state trajectories for all models and the control trajectories. The resulting controllers from the multi-step models were able to stabilize the oscillator gradually while ensuring that input constraints were satisfied. In contrast, the single-step EDMD-based controllers exhibit frequent excitations and transitions between the two stable equilibrium points. This behavior is consistent with the poor long-horizon predictions of the one-step predictor, particularly near the separatrix of the dynamics, where small perturbations lead to large deviations in the long-horizon behavior. The phase portraits for all closed-loop simulations are shown in Fig. \ref{fig:phaseportraitduffing}.

Overall, the open- and closed-loop simulations demonstrate that the multi-step Koopman operators are far superior in terms of predictive accuracy and stable MPC operation than one-step predictors in strongly nonlinear dynamical regimes. The pruning process also enables the removal of redundant observables in the multi-step predictor and does not significantly affect the predictive accuracy or closed-loop performance of our multi-step Koopman predictors. 
\begin{figure}[t]
  \centering
  \includegraphics[width=\linewidth]{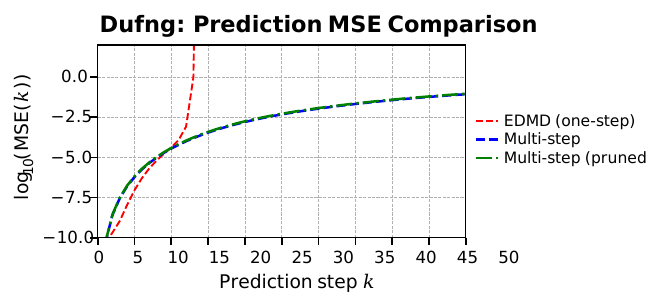}

  \vspace{-0.5cm}
  
  \caption{Prediction MSE comparison for the Duffing oscillator across different models.}
  \label{fig:openloopduffing}
\end{figure}

\begin{figure}[t]
  \centering
  \includegraphics[width=1.00 \linewidth]{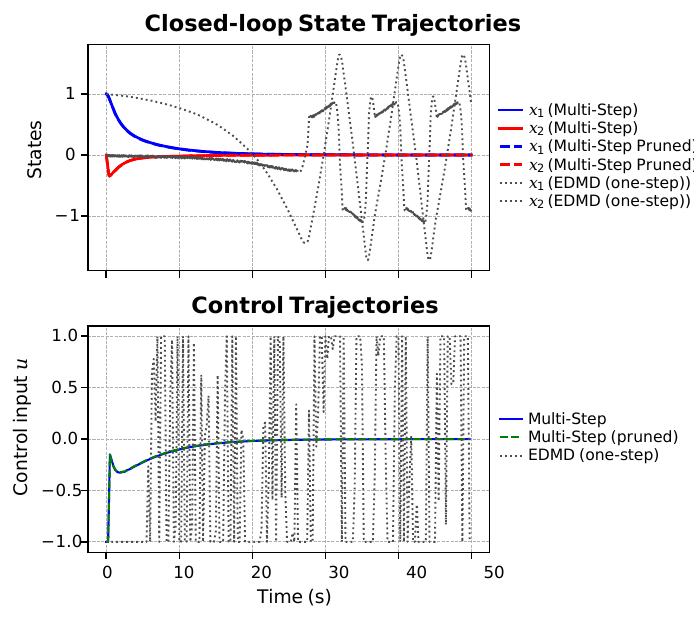}
  
  \vspace{-0.5cm}
  
  \caption{Closed-loop state (top) and control (bottom) trajectories for the oscillator.}
  \label{fig:closedloopduffing}
\end{figure}

\begin{figure}[t]
  \centering
 \includegraphics[width=\linewidth]{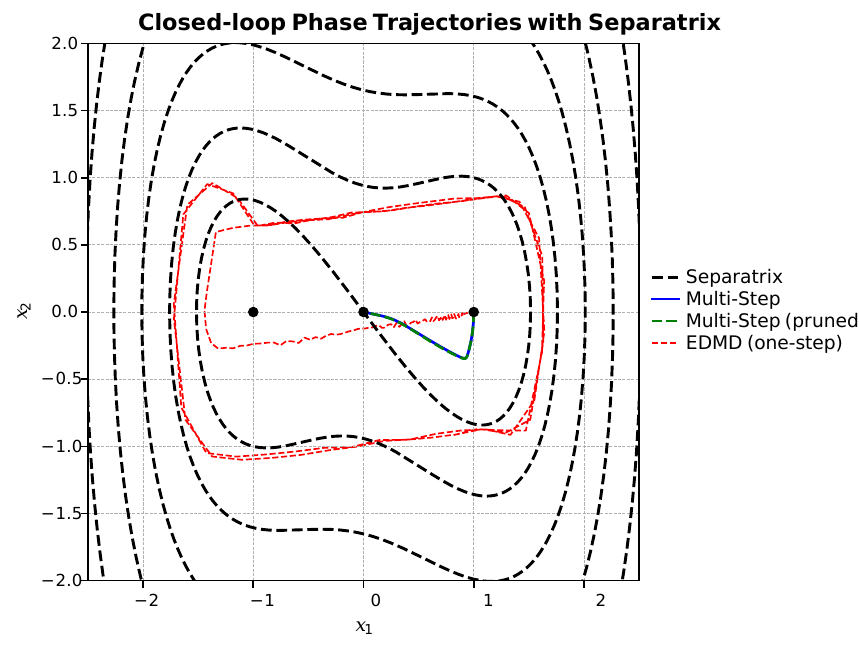}
  
  \vspace{-0.5cm}
  
  \caption{Closed-loop phase trajectories of the Duffing oscillator with separatrix. The separatrix separates the two basins of attraction of the equilibria (black dots). Multi-step MPC and its pruned variant converge to the origin, while single-step MPC crosses the separatrix, leading to poor closed-loop behavior. }
  \label{fig:phaseportraitduffing}
\end{figure}

\section{Conclusion}
This paper proposes a multi-step EDMD framework for Koopman-based MPC that directly learns the condensed multi-step state–control mapping required for QP-based MPC formulations. The proposed approach preserves a convex least-squares structure while aligning the learning objective with the finite-horizon nature of MPC. We further showed that the identification problem decomposes across prediction horizons and state coordinates, enabling parallelized computation and row-wise $\ell_1$-regularization for automatic dictionary pruning. A non-asymptotic error analysis provides a rigorous explanation for the observed improved performance of the proposed multi-step EDMD for long-horizon predictions. Numerical simulations on the van der Pol oscillator and the Duffing oscillator demonstrate that our framework achieves low multi-step prediction errors and effective closed-loop state regulation compared to conventional EDMD. Future work aims to analyze stochastic extensions of the multi-step Koopman operator \cite{xu2025data,vcrnjaric2020koopman} and integrate the framework with Stochastic MPC (SMPC) formulations \cite{mesbah2016stochastic,heirung2018stochastic}.

\bibliographystyle{plain}
\bibliography{ref}           

@book{Vershynin_2018, place={Cambridge}, seriess={Cambridge Series in Statistical and Probabilistic Mathematics}, title={High-Dimensional Probability: An Introduction with Applications in Data Science}, publisher={Cambridge University Press}, address = {UK}, author={Vershynin, Roman}, year={2018}, collection={Cambridge Series in Statistical and Probabilistic Mathematics}}

@article{philipp2024variancerepresentationsconvergencerates,
      title={Variance representations and convergence rates for data-driven approximations of {Koopman} operators}, 
      author={Friedrich M. Philipp and Manuel Schaller and Septimus Boshoff and Sebastian Peitz and Feliks Nüske and Karl Worthmann},
      year={2024},
journal = {arXiv preprint arXiv:2402.02494},
}

@article{NuskePeitzPhilippSchallerWorthmann2023,
  title        = {Finite-Data Error Bounds for {K}oopman-Based Prediction and Control},
  author       = {N{\"u}ske, Feliks and Peitz, Sebastian and Philipp, Friedrich and Schaller, Manuel and Worthmann, Karl},
  journal      = {Journal of Nonlinear Science},
  volume       = {33},
  number       = {14},
  pages        = {1--34},
  year         = {2023},
}

@article{zhangzuazua2022,
author = {Zhang, Christophe and Zuazua, Enrique},
year = {2022},
month = {12},
pages = {1-31},
title = {A quantitative analysis of {K}oopman operator methods for system identification and predictions},
volume = {351},
journal = {Comptes Rendus. Mécanique},
doi = {10.5802/crmeca.138}
}

@article{PHILIPP2024kernelacha,
title = {Error bounds for kernel-based approximations of the Koopman operator},
journal = {Applied and Computational Harmonic Analysis},
volume = {71},
pages = {101657},
year = {2024},
issn = {1063-5203},
doi = {https://doi.org/10.1016/j.acha.2024.101657},
author = {Friedrich M. Philipp and Manuel Schaller and Karl Worthmann and Sebastian Peitz and Feliks Nüske},
keywords = {Koopman operator, Reproducing kernel, Data-driven approximation, Stochastic differential equation, Ergodicity, Error bounds, Cross-covariance},
abstract = {We consider the data-driven approximation of the Koopman operator for stochastic differential equations on reproducing kernel Hilbert spaces (RKHS). Our focus is on the estimation error if the data are collected from long-term ergodic simulations. We derive both an exact expression for the variance of the kernel cross-covariance operator, measured in the Hilbert-Schmidt norm, and probabilistic bounds for the finite-data estimation error. Moreover, we derive a bound on the prediction error of observables in the RKHS using a finite Mercer series expansion. Further, assuming Koopman-invariance of the RKHS, we provide bounds on the full approximation error. Numerical experiments using the Ornstein-Uhlenbeck process illustrate our results.}
}

@article{hertel2025koopmanstochasticdynamicserror,
      title={Koopman for stochastic dynamics: error bounds for kernel extended dynamic mode decomposition}, 
      author={Maximiliano Hertel and Friedrich M. Philipp and Manuel Schaller and Karl Worthmann},
journal={arXiv preprint arXiv:2512.20247},
year={2025},
}

@inproceedings{arbabi2018data,
  title={A data-driven {K}oopman model predictive control framework for nonlinear partial differential equations},
  author={Arbabi, Hassan and Korda, Milan and Mezi{\'c}, Igor},
  booktitle={Proceedings of the IEEE Conference on Decision and Control},
  pages={6409--6414},
  year={2018},
  organizations={IEEE}
}

@article{shi2022deep,
  title={Deep {K}oopman operator with control for nonlinear systems},
  author={Shi, Haojie and Meng, Max Q-H},
  journal={IEEE Robotics and Automation Letters},
  volume={7},
  number={3},
  pages={7700--7707},
  year={2022},
  publisher={IEEE}
}

@article{li2017extended,
  title={Extended dynamic mode decomposition with dictionary learning: {A} data-driven adaptive spectral decomposition of the {K}oopman operator},
  author={Li, Qianxiao and Dietrich, Felix and Bollt, Erik M and Kevrekidis, Ioannis G},
  journal={Chaos: An Interdisciplinary Journal of Nonlinear Science},
  volume={27},
  number={10},
  year={2017},
  publisher={AIP Publishing}
}

@article{mauroy2016global,
  title={Global stability analysis using the eigenfunctions of the {K}oopman operator},
  author={Mauroy, Alexandre and Mezi{\'c}, Igor},
  journal={IEEE Transactions on Automatic Control},
  volume={61},
  number={11},
  pages={3356--3369},
  year={2016},
  publisher={IEEE}
}

@article{canuto1982approximation,
  title={Approximation results for orthogonal polynomials in {S}obolev spaces},
  author={Canuto, Claudio and Quarteroni, Alfio},
  journal={Mathematics of Computation},
  volume={38},
  number={157},
  pages={67--86},
  year={1982}
}

@article{koopman1931hamiltonian,
  title={Hamiltonian systems and transformation in {H}ilbert space},
  author={B.O. Koopman},
  journal={Proceedings of the National Academy of Sciences},
  volume={17},
  number={5},
  pages={315--318},
  year={1931},
  publisher={National Acad Sciences}
}

@article{korda2018linear,
  title={Linear predictors for nonlinear dynamical systems: {K}oopman operator meets model predictive control},
  author={M. Korda and I. Mezi{\'c}},
  journal={Automatica},
  volume={93},
  pages={149--160},
  year={2018},
  publisher={Elsevier}
}

@article{williams2016extending,
  title={Extending data-driven {K}oopman analysis to actuated systems},
  author={M.O. Williams and M.S. Hemati and S.T.M. Dawson and I.G. Kevrekidis and C.W. Rowley},
  journal={IFAC-PapersOnLine},
  volume={49},
  number={18},
  pages={704--709},
  year={2016},
  publisher={Elsevier}
}

@article{proctor2018generalizing,
  title={Generalizing {K}oopman theory to allow for inputs and control},
  author={J.L. Proctor and S.L. Brunton and J.N. Kutz},
  journal={SIAM Journal on Applied Dynamical Systems},
  volume={17},
  number={1},
  pages={909--930},
  year={2018},
  publisher={SIAM}
}

@article{williams2015data,
  title={A data-driven approximation of the {K}oopman operator: {E}xtending dynamic mode decomposition},
  author={M.O. Williams and I.G. Kevrekidis and C.M. Rowley},
  journal={Journal of Nonlinear Science},
  volume={25},
  pages={1307--1346},
  year={2015},
  publisher={Springer}
}

@inproceedings{yeung2019learning,
  title={Learning deep neural network representations for {K}oopman operators of nonlinear dynamical systems},
  author={Yeung, Enoch and Kundu, Soumya and Hodas, Nathan},
  booktitle={Proceedings of the American Control Conference},
  pages={4832--4839},
  year={2019},
  organizations={IEEE}
}

@article{xiao2022deep,
  title={Deep neural networks with {K}oopman operators for modeling and control of autonomous vehicles},
  author={Xiao, Yongqian and Zhang, Xinglong and Xu, Xin and Liu, Xueqing and Liu, Jiahang},
  journal={IEEE transactions on intelligent vehicles},
  volume={8},
  number={1},
  pages={135--146},
  year={2022},
  publisher={IEEE}
}

@inproceedings{de2024koopman,
  title={{Koopman} data-driven predictive control with robust stability and recursive feasibility guarantees},
  author={de Jong, Thomas and Breschi, Valentina and Schoukens, Maarten and Lazar, Mircea},
  booktitle={IEEE 63rd Conference on Decision and Control},
  pages={140--145},
  year={2024},
  organizations={IEEE}
}

@article{korda2018convergence,
  title={On convergence of extended dynamic mode decomposition to the {K}oopman operator},
  author={Korda, Milan and Mezi{\'c}, Igor},
  journal={Journal of Nonlinear Science},
  volume={28},
  number={2},
  pages={687--710},
  year={2018},
  publisher={Springer}
}

@book{szeg1939orthogonal,
  title={Orthogonal Polynomials},
  author={Szegő, Gábor},
  volume={23},
  year={1939},
  publisher={American Mathematical Soc.},
  address = {New York}
}

@article{mauroy2019koopman,
  title={{K}oopman-based lifting techniques for nonlinear systems identification},
  author={Mauroy, Alexandre and Goncalves, Jorge},
  journal={IEEE Transactions on Automatic Control},
  volume={65},
  number={6},
  pages={2550--2565},
  year={2019},
  publisher={IEEE}
}

@article{guo2025learning,
  title={Learning parametric {K}oopman decompositions for prediction and control},
  author={Guo, Yue and Korda, Milan and Kevrekidis, Ioannis G and Li, Qianxiao},
  journal={SIAM Journal on Applied Dynamical Systems},
  volume={24},
  number={1},
  pages={744--781},
  year={2025},
  publisher={SIAM}
}

@inproceedings{di2024stable,
  title={Stable linear subspace identification: A machine learning approach},
  author={Di Natale, Loris and Zakwan, Muhammad and Svetozarevic, Bratislav and Heer, Philipp and Ferrari-Trecate, Giancarlo and Jones, Colin N},
  booktitle={Proceedings of the 2024 European Control Conference (ECC)},
  pages={3539--3544},
  year={2024},
  organization={IEEE}
}

@inproceedings{sayed2024novel,
  title={A Novel {K}oopman Representation for Efficient Linear Model Predictive Control of Nonlinear Systems},
  author={Sayed, Omar and Lucia, Sergio},
  booktitle={2024 European Control Conference (ECC)},
  pages={3570--3575},
  year={2024},
  organization={IEEE}
}

@article{abtahi2025multi,
  title={Multi-Step Deep {K}oopman Network ({MDK}-Net) for Vehicle Control in {F}renet Frame},
  author={Abtahi, Mohammad and Rabbani, Mahdis and Abdolmohammadi, Armin and Nazari, Shima},
  journal={arXiv preprint arXiv:2503.03002},
  year={2025}
}

@inproceedings{su1993neural,
  title={Neural model predictive control of nonlinear chemical processes},
  author={Su, H-T and McAvoy, TJ},
  booktitle={Proceedings of 8th IEEE International Symposium on Intelligent Control},
  pages={358--363},
  year={1993},
  organization={IEEE}
}

@article{mesbah2016stochastic,
  title={Stochastic model predictive control: An overview and perspectives for future research},
  author={Mesbah, Ali},
  journal={IEEE Control Systems Magazine},
  volume={36},
  number={6},
  pages={30--44},
  year={2016},
  publisher={IEEE}
}

@article{xu2025data,
  title={A Data-Driven Framework for {K}oopman Semigroup Estimation in Stochastic Dynamical Systems},
  author={Xu, Yuanchao and Shao, Kaidi and Ishikawa, Isao and Hashimoto, Yuka and Logothetis, Nikos and Shen, Zhongwei},
  journal={arXiv preprint arXiv:2501.13301},
  year={2025}
}

@article{vcrnjaric2020koopman,
  title={{K}oopman operator spectrum for random dynamical systems},
  author={{\v{C}}rnjari{\'c}-{\v{Z}}ic, Nelida and Ma{\'c}e{\v{s}}i{\'c}, Senka and Mezi{\'c}, Igor},
  journal={Journal of Nonlinear Science},
  volume={30},
  number={5},
  pages={2007--2056},
  year={2020},
  publisher={Springer}
}

@inproceedings{sayed2024recursive,
  title={Recursive least squares-based identification for multi-step {Koopman} operators},
  author={Sayed, Omar and Lucia, Sergio},
  booktitle={Proceedings of the European Control Conference},
  pages={941--946},
  year={2024},
  organizations={IEEE}
}

@inproceedings{lazar2024basis,
  title={Basis-functions nonlinear data-enabled predictive control: Consistent and computationally efficient formulations},
  author={Lazar, Mircea},
  booktitle={Proceedings of the European Control Conference},
  pages={888--893},
  year={2024},
  organizations={IEEE}
}

@article{sun2020alven,
  author    = {W. Sun and R. D. Braatz},
  title     = {Alven: Algebraic learning via elastic net for static and dynamic nonlinear model identification},
  journal   = {Computers and Chemical Engineering},
  volume    = {143},
  articleno = {107103},
  year      = {2020}
}

@article{strasser2026overview,
  title={An overview of {K}oopman-based control: From error bounds to closed-loop guarantees},
  author={Str{\"a}sser, Robin and Worthmann, Karl and Mezi{\'c}, Igor and Berberich, Julian and Schaller, Manuel and Allg{\"o}wer, Frank},
  journal={Annual Reviews in Control},
  volume={61},
  pages={101035},
  year={2026},
  publisher={Elsevier}
}

@book{rawlings2020model,
  title={Model predictive control: theory, computation, and design},
  author={Rawlings, James Blake and Mayne, David Q and Diehl, Moritz and others},
  volume={2},
  year={2020},
  publisher={Nob Hill Publishing Madison, WI}
}

@article{dahdah2022system,
  title={System norm regularization methods for {Koopman} operator approximation},
  author={Dahdah, Steven and Forbes, James R},
  journal={Proceedings of the Royal Society A},
  volume={478},
  number={2265},
  pages={20220162},
  year={2022},
  publisher={The Royal Society}
}

@article{korda2020optimal,
  title={Optimal construction of {Koopman} eigenfunctions for prediction and control},
  author={Korda, Milan and Mezi{\'c}, Igor},
  journal={IEEE Transactions on Automatic Control},
  volume={65},
  number={12},
  pages={5114--5129},
  year={2020},
  publisher={IEEE}
}

@article{heirung2018stochastic,
  title={Stochastic model predictive control—how does it work?},
  author={Heirung, Tor Aksel N and Paulson, Joel A and O’Leary, Jared and Mesbah, Ali},
  journal={Computers \& Chemical Engineering},
  volume={114},
  pages={158--170},
  year={2018},
  publisher={Elsevier}
}

@article{zou2005regularization,
  title={Regularization and variable selection via the elastic net},
  author={Zou, Hui and Hastie, Trevor},
  journal={Journal of the Royal Statistical Society Series B: Statistical Methodology},
  volume={67},
  number={2},
  pages={301--320},
  year={2005},
  publisher={Oxford University Press}
}

@article{ferreau2014qpoases,
  title={qp{OASES}: {A} parametric active-set algorithm for quadratic programming},
  author={Ferreau, Hans Joachim and Kirches, Christian and Potschka, Andreas and Bock, Hans Georg and Diehl, Moritz},
  journal={Mathematical Programming Computation},
  volume={6},
  number={4},
  pages={327--363},
  year={2014},
  publisher={Springer}
}

@article{stellato2020osqp,
  title={{OSQP}: {A}n operator splitting solver for quadratic programs},
  author={Stellato, Bartolomeo and Banjac, Goran and Goulart, Paul and Bemporad, Alberto and Boyd, Stephen},
  journal={Mathematical Programming Computation},
  volume={12},
  number={4},
  pages={637--672},
  year={2020},
  publisher={Springer}
}

@article{wu2023simple,
  title={A {S}imple and {F}ast {C}oordinate-{D}escent {A}ugmented-{L}agrangian {S}olver for {M}odel {P}redictive Control},
  author={Wu, Liang and Bemporad, Alberto},
  journal={IEEE Transactions on Automatic Control},
  volume={68},
  number={11},
  pages={6860--6866},
  year={2023},
  publisher={IEEE}
}

@ARTICLE{wu2025direct,
  title={A {D}irect {O}ptimization {A}lgorithm for {I}nput-{C}onstrained {MPC}}, 
  author={Wu, Liang and Braatz, Richard D.},
  journal={IEEE Transactions on Automatic Control}, 
  year={2025},
  volume={70},
  number={2},
  pages={1366-1373}
}

@article{wu2025eiqp,
  title={{EIQP}: {E}xecution-time-certified and {I}nfeasibility-detecting {QP} Solver},
  author={L. Wu and W. Xiao and R. D. Braatz},  
  journal={IEEE Transactions on Automatic Control}, 
  year={2025},
  volume={},
  number={},
  pages={1-16},
  note={{DOI}: \url{10.1109/TAC.2025.3631575}}
}

@article{wu2025time,
  title={A {T}ime-certified {P}redictor-corrector {IPM} {A}lgorithm for {B}ox-{QP}},
  author={L. Wu and Y. Che and R. D. Braatz and J. Drgona},
  journal={IEEE Control System Letters},
  year={2025},
  volume={},
  number={},
  pages={1-16},
  note={{DOI}: \url{10.1109/LCSYS.2025.3647842}}
}

@article{gros2020linear,
  title={From linear to nonlinear {MPC}: {Bridging} the gap via the real-time iteration},
  author={S. Gros and M. Zanon and R. Quirynen and A. Bemporad and M. Diehl},
  journal={International Journal of Control},
  volume={93},
  number={1},
  pages={62--80},
  year={2020},
  publisher={Taylor \& Francis}
}

@article{berberich2022linear,
  title={Linear tracking {MPC} for nonlinear systems—{P}art I: {T}he model-based case},
  author={Berberich, Julian and K{\"o}hler, Johannes and M{\"u}ller, Matthias A and Allg{\"o}wer, Frank},
  journal={IEEE transactions on automatic control},
  volume={67},
  number={9},
  pages={4390--4405},
  year={2022},
  publisher={IEEE}
}

\appendix

\section{Proof of Lemma \ref{lem:regression}}\label{app:regression}
 First, we prove that the Legendre polynomials satisfy a Euclidean norm growth bound.
\begin{lemma}\label{lemma:normbound}
$\norm{\bpsi(x)}_2 \leq (p+1)^{n_x} = N$  $\mu$-almost surely. 
\end{lemma}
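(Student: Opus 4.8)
The plan is to exploit the tensor-product structure of the multivariate Legendre dictionary together with the classical pointwise bound on univariate Legendre polynomials. Since $\Phi_{\boldsymbol{\alpha}}(x) = \prod_{i=1}^{n_x} \phi_{\alpha_i}(x^i)$, the squared Euclidean norm factorizes across coordinates:
\begin{equation}
\norm{\bpsi(x)}_2^2 = \sum_{\boldsymbol{\alpha} \in [p]^{n_x}} \prod_{i=1}^{n_x} \phi_{\alpha_i}(x^i)^2 = \prod_{i=1}^{n_x} \left( \sum_{\alpha=0}^{p} \phi_\alpha(x^i)^2 \right),
\end{equation}
which reduces the multivariate claim to a single one-dimensional estimate of $\sum_{\alpha=0}^p \phi_\alpha(y)^2$ for $y \in [-1,1]$.

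First I would recall the normalization $\phi_\alpha = \sqrt{2\alpha+1}\, P_\alpha$, where $P_\alpha$ denotes the standard (unnormalized) Legendre polynomial from the Rodrigues formula in \eqref{eq:legendre}, chosen so that $\norm{\phi_\alpha}_\cH = 1$ with respect to the uniform probability measure on $[-1,1]$. The key classical input is the sup-norm bound $|P_\alpha(y)| \leq P_\alpha(1) = 1$ for all $y \in [-1,1]$, whose maximum is attained at the endpoints $y = \pm 1$. This immediately yields the pointwise estimate $\phi_\alpha(y)^2 \leq 2\alpha+1$.

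Summing over $\alpha = 0, \ldots, p$ and invoking the elementary identity $\sum_{\alpha=0}^p (2\alpha+1) = (p+1)^2$ (the sum of the first $p+1$ odd integers) gives the uniform bound $\sum_{\alpha=0}^p \phi_\alpha(y)^2 \leq (p+1)^2$ for every $y \in [-1,1]$. Substituting this into the factorized expression above produces
\begin{equation}
\norm{\bpsi(x)}_2^2 \leq \prod_{i=1}^{n_x} (p+1)^2 = (p+1)^{2n_x} = N^2,
\end{equation}
and taking square roots completes the argument. Because the bound holds pointwise for every $x \in \Omega = [-1,1]^{n_x}$, it holds in particular $\mu$-almost surely.

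The argument is essentially routine, and I do not anticipate a genuine obstacle. The only points requiring care are tracking the correct normalization constant $\sqrt{2\alpha+1}$ and using the sharp endpoint bound $|P_\alpha| \le 1$ — it is precisely this sharpness (attained at the corners of the hypercube) that forces the factor $2\alpha+1$ and makes the resulting bound $N$ tight, so no slack is lost in the estimate.
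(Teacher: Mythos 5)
Your proof is correct and follows essentially the same route as the paper's: both reduce to the univariate bound $\sup_{y\in[-1,1]}|\phi_\alpha(y)| = \phi_\alpha(1) = \sqrt{2\alpha+1}$, sum the odd integers to get $(p+1)^2$, and lift to $n_x$ dimensions via the tensor-product structure (the paper phrases this as $\norm{\psi(1)\otimes\cdots\otimes\psi(1)}_2^2=(p+1)^{2n_x}$, which is the same computation as your coordinate-wise factorization of the sum of squares). No gap.
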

\begin{proof}
Let $\phi_\alpha(x)$ be the $k$th Legendre polynomial defined in  \eqref{eq:legendre}. Note that
$$\sup_{x \in [-1,1]} \phi_\alpha(x) = \phi_\alpha(1) = \sqrt{2k+1} $$
which follows from Eq.\ 7.21.1 in \cite{szeg1939orthogonal}. Let $\psi(x) = [\phi_0(x),\phi_1(x), \cdots{}, \phi_p(x)]$. For $n_x = 1$, the statement now follows from a direct computation:
\begin{equation}
\sup_{x \in [-1,1]}\|\boldsymbol{\psi}(x)\|_2^2
= \|\psi(1)\|_2^2
= \sum_{\alpha=0}^p (2k+1)
= (p+1)^2 .
\end{equation}
For arbitrary $n_x$, note that
\begin{equation}
\boldsymbol{\psi}(x)
= \psi(x^1)\otimes\psi(x^2)\otimes\cdots\otimes\psi(x^{n_x}),
\end{equation}
and hence
\begin{align*}
\sup_{x \in [-1,1]^{n_x}}\|\boldsymbol{\psi}(x)\|_2^2
&= \|\psi(1)\otimes\psi(1)\otimes\cdots\otimes\psi(1)\|_2^2\\
&= (p+1)^{2n_x}.
\end{align*}
\end{proof}
\textbf{Proof of Lemma \ref{lem:regression}}: Let $N=(p+1)^{n_x}$ and define $\bpsi_j:=\bpsi(x_j)$. By orthonormality of the PCE basis,
$\mathbb E[\bpsi_j\bpsi_j^\top]=I_N$. Moreover, Lemma~\ref{lemma:normbound}
implies $\|\bpsi_j\|_2\le N$ almost surely. Define $Z_j:=\bpsi_j\bpsi_j^\top-I_N$. Then $\mathbb E[Z_j]=0$ and
$
\|Z_j\|\le \|\bpsi_j\|_2^2+1\le N^2+1 .$
Furthermore,
\[
\mathbb E[Z_j^2]
=\mathbb E\!\left[\|\bpsi_j\|_2^2\bpsi_j\bpsi_j^\top\right]-I_N .
\]
For any unit vector $u\in\mathbb R^N$,
\[
u^\top\mathbb E[Z_j^2]u
\le N^2\,\mathbb E[(u^\top\bpsi_j)^2]-1
= N^2-1,
\]
and hence $\|\mathbb E[Z_j^2]\|\le N^2-1$. Therefore,
\[
\Big\|\sum_{i=1}^M \mathbb E[Z_j^2]\Big\|
\le M(N^2-1).
\]

By the Matrix Bernstein Inequality (Thm.\ 5.4.1 in \cite{Vershynin_2018}), there exists a universal constant $\tilde C$ such that,
with probability at least $1-\delta$,
\[
\|\widehat\Sigma-I_N\|
\le \tilde C\!\left(
\frac{N^2}{M}\log\frac{2N}{\delta}
+\sqrt{\frac{N^2}{M}\log\frac{2N}{\delta}}
\right)
=: \eta .
\]
For $\eta<1$, this inequality implies that
\[
\|\widehat\Sigma^{-1}-I_N\|
\le \frac{\eta}{1-\eta}.
\]

Next, define $g:=\mathbb E[\bpsi_j y_j]$. For any unit vector $u\in\mathbb R^N$,
\[
|u^\top g|
\le \sqrt{\mathbb E[y_j^2]\mathbb E[(u^\top\bpsi_j)^2]}
\le M_y,
\]
so $\|g\|_2\le M_y$. Define $v_j:=\bpsi_j y_j-g$. Let
\[
V_j:=\begin{bmatrix}0 & v_j^\top \\ v_j & 0\end{bmatrix}.
\]
Then $\mathbb E[V_j]=0$ and $\|V_j\|\le \|V_j\|_F\le 2\,M_y N $. Moreover,
$\|V_j^2\|\le 4 M_y^2 N^2$, yielding
\[
\Big\|\sum_{i=1}^M \mathbb E[V_j^2]\Big\|\le 4M_y^2 MN^2 .
\]
Applying the Matrix Bernstein inequality again gives, with probability $1-\delta$,
\[
\|\widehat g-g\|_2
\le 2\tilde C\!\left(
\frac{M_yN}{M}\log\frac{2N}{\delta}
+\sqrt{\frac{M_y^2N^2}{M}\log\frac{2N}{\delta}}
\right).
\]

Set $\delta=M^{-2}$. Then, with probability $1-2M^{-2}$,
\begin{align*}
\|\widehat\beta-\beta_\star\|_2
&\le \|\widehat\Sigma^{-1}-I_N\|\,\|g\|_2
 + \|\widehat\Sigma^{-1}\|\,\|\widehat g-g\|_2 \\
&\le \frac{\eta}{1-\eta}M_y
 + \Bigl(1+\frac{\eta}{1-\eta}\Bigr)\|\widehat g-g\|_2 \\
&= C_l\!\left(
\sqrt{\frac{M_y^2N^2\log M}{M}}
\right),
\qquad M\gg N^2,
\end{align*}
for some universal constant $\tilde{C_l}$. Finally, since $\Sigma=I_N$,
\begin{align*}
&\|y-\widehat\beta^\top\bpsi\|_\cH^2
-
\|y-\beta_\star^\top\bpsi\|_\cH^2 \\
&\hspace{2em}= \mathbb E[(y-\widehat\beta^\top\bpsi(x))^2]
-\mathbb E[(y-{\beta_\star}^\top\bpsi(x))^2] \\
&\hspace{2em}\le
(\widehat\beta-\beta_\star)^\top\Sigma(\widehat\beta-\beta_\star) \\
&\hspace{2em}= \|\widehat\beta-\beta_\star\|_2^2
= C_l\!\left(\frac{M_y^2N^2\log M}{M}\right)
\end{align*}
for some universal constant $C_l$.

\section{Proof of Theorem \ref{thm:singlestep}}\label{app:singlestep}
We first decompose the error as 
\begin{align*}
\|\Kop g - \KopNM g\|_{\cH} &\leq \|\Kop (g - P_N g)\|_{\cH} + \|\Kop P_N g - \KopNM g \|_{\cH} \\
& =  \|\Kop (g - P_N g)\|_{\cH} + \|(\Kop  - \KopNM)P_N g \|_{\cH}
\end{align*}
The second equality follows because $\KopNM g  = P_{N,M} \Kop P_N g =P_{N,M} \Kop P_N^2 g =   \KopNM P_N g $. The first term is known as the approximation error, and can be bounded using Lemma \ref{lemma: PCE decay} with $q = 0$: 
\begin{align}
 \|\Kop (g - P_N g)\|_{\cH} &\leq  \|\Kop\|_{\cH}  \| (g - P_N g)\|_{\cH} \nonumber \nonumber \\
 &\leq  C_{s,n_x}p^{-s} \|\Kop\|_{\cH} \norm{g}_{H^s_\mu(\Omega)} \label{error:approximation}.
\end{align}
The second term can be further bounded as 
\begin{small}
\begin{equation}
 \|(\Kop - \KopNM) P_N g\|_{\cH} \leq  \|(\Kop - \KopN) P_N g\|_{\cH} + \|(\KopN - \KopNM) P_N g\|_{\cH} .
\end{equation}
\end{small}
The first and second terms on the right-hand side are known as the bias and variance error, respectively.
The bias term can again be bounded using Lemma \ref{lemma: PCE decay} with $ q > 0$: 
\begin{align*}
\|(\Kop - \KopN) P_N g\|_{\cH} &= \|(I - P_N)\Kop P_N g\|_{\cH}  \\
&= \|(I - P_N) \big((P_N g) \circ f\big)\|_{\cH}  \\
&\leq C_{s,n_x}p^{-q} \norm{(P_N g) \circ f\big)}_{H^q_\mu(\Omega)} \nonumber \\
&= C_{s,n_x}p^{-q} \norm{\Kop}_{H^q_\mu(\Omega)} \norm{P_N g}_{H^q_\mu(\Omega)} \\
\end{align*}
By noting that 
\begin{align*}
\norm{P_N g}_{H^q_\mu(\Omega)}  &\leq \norm{g}_{H^q_\mu(\Omega)} + \norm{(I -P_N)g}_{H^q_\mu(\Omega)} \\
& = \norm{g}_{H^q_\mu(\Omega)} + C_{s,n_x}p^{2q-s-\frac12} \norm{g}_{H^s_\mu(\Omega)},
\end{align*}
we have that 
\begin{align}
\|(\Kop - \Kop_N) P_N g\|_{\cH}
&\le C_{s,n_x}\,p^{-q}\,\|\Kop\|_{H^q_\mu(\Omega)}
\Big(
\|g\|_{H^q_\mu(\Omega)} \nonumber \\
&+ C_{s,n_x}\,p^{2q-s-\frac12}\,\|g\|_{H^s_\mu(\Omega)}
\Big). \label{eqn:bias}
\end{align}
It now remains to bound the variance error $\|(\KopN - \KopNM) P_N g\|_{\cH}.$ Write $P_N g = \sum_{l =1}^N c_l \psi_l$. Then we have, by the Cauchy-Schwarz Inequality,
\begin{small}
\begin{equation}
\|(\KopN - \KopNM) P_N g\|_{\cH}^2\leq N \sum_{l=1}^N c_l^2 \|(\KopN - \KopNM)\psi_l\|_{\cH}^2. 
\end{equation}

\end{small}
We now apply Lemma \ref{lem:regression} for each basis function $\psi_l$ by defining the response $y = \KopN \psi_l = P_N ( \psi_l \circ f) $. Note that we have $|y| \leq B_{f,N} $ $\mu$-almost surely, and also, 
\begin{equation}
\Kop_{N,M} \psi_l = {\beta^{(l)}}^\top \bpsi
 \end{equation}
 where $\beta^{(l)}$ is the $k$th row of the EDMD matrix $A_{N,M}$ that minimizes the empirical $L^2$ loss. Furthermore, if $\beta_{*}^{(l)}$ minimizes the population loss, then
 \begin{equation}
 \| y-\beta^{(l)}_*\bpsi\|^2_\cH = 0
 \end{equation}
 because $y = P_N (\psi_l \circ f) \in \cF_N$. By Lemma \ref{lem:regression}, we obtain w.h.p., 
 \begin{align*}
\|(\KopN - \KopNM)\psi_l\|_{\cH}^2 &=  \| y-\beta^{(l)}_*\bpsi\|^2_\cH  \\
  &\leq  
C_l\,\frac{B_{f,N}^2 N^2\log M}{M}
\end{align*}
Summing all the components gives 
\begin{align}
\|(\KopN - \KopNM) P_N g\|_{\cH}
&\leq \!\left(N\sum_{l=1}^N c_l^2 \,C_l\,\frac{B_{f,N}^2 N^2\log M}{M} \right)^{\!\!1/2} \nonumber \\ 
&\leq \sqrt{C_l\,\frac{B_{f,N}^2 N^3\log M}{M}} \norm{g}_\cH  \label{error:variance}
\end{align}
where we used Parseval's Identity $\norm{g}_\cH^2 = \sum_{l=1}^\infty c_l^2$. Summing the approximation error \eqref{error:approximation}, the bias error \eqref{eqn:bias}, and the variance error \eqref{error:variance} gives the total error bound \eqref{eqn:ssEDMDerror}.

\section{Elastic-net-regularized multi-step EDMD learning with Pruning}\label{app:pruning}

\begin{algorithm}
    \caption{Elastic-net-regularized multi-step EDMD with pruning of nonlinear observables}\label{algo:L1-train}
    \textbf{Input}: Dataset $\mathcal{D}= \left\{(x_{j,k},x_{j,0},u_{j,k-1}) \right\}~\forall j\in[1,M_m],~\forall k \in [1,H]$, the initial library of nonlinear observables $\psi:\rr^{n_x}\rightarrow\rr^{N}$, regularization parameters $\beta\geq0,\tau\geq0$, the pruning parameter $\epsilon_E\geq0$, the MPC prediction horizon length $H$.
    \vspace*{.1cm}\hrule\vspace*{.1cm}
    \begin{enumerate}[label*=\arabic*., ref=\theenumi{}]
        \item Divide $\mathcal{D}$ into $H$ pieces: for $k=1,\cdots{},H$ 
        \[
        \mathcal{D}_k=\{(x_{j,k},x_{j,0},u_{j,0},\cdots{},u_{j,k-1})\}_{j=1}^{M_m};
        \]
        \item Prepare the matrix $\mathbf{G}$ \eqref{eqn:GH_def} by computing $\psi(x_{j,0})$;
        \item \textbf{for} $k=1,\ldots,H$ \textbf{(in parallel)}:
        \begin{enumerate}[label=\theenumi{}.\arabic*., ref=\theenumi{}.\arabic*]
            \item Prepare the matrix $\mathbf{H}_k$ \eqref{eqn:GH_def};
            \item \textbf{for} $i=1,\ldots,n_x$ \textbf{(in parallel)}:
            \begin{enumerate}[label=\theenumii{}.\arabic*., ref=\theenumii{}.\arabic*]
                \item Prepare the vector $h_k^i$ \eqref{eqn:h_def};
                \item $E_{k,i} , F_{k,i} \leftarrow$  the solution of  \eqref{eqn_sparse_reg_minimization};
            \end{enumerate}
            \item Get $E_k$ by combining $E_{k,i}$: $E_k\leftarrow (\cdots,E_{k,i},\cdots)^\top$;
            \item Get $F_k$ by combining $F_{k,i}$: $F_k\leftarrow (\cdots,F_{k,i},\cdots)^\top$;
        \end{enumerate}
        \item Get $E,F$ by combining $E_k,F_k$; 
        \item The index set $L\leftarrow\emptyset$;
        \item \textbf{for} $l=1,\ldots,N$: \textbf{if} $E_{l,:}$ (the $l$th column of $E$) such that $\|E_{l,:}\|_\infty\geq\epsilon_E$, \textbf{then set} $L\leftarrow L\cup\{l\}$; \textbf{end}      
        \item Pruning $E,F$: $E\leftarrow E_{L,:}$, $F\leftarrow F_{L,:}$;
        \item Pruning the nonlinear observable $\psi$: $\psi\leftarrow\psi_{L}$;
        \item $N\leftarrow\card(L)$;
        \item \textbf{end}.
    \end{enumerate}
    \vspace*{.1cm}\hrule\vspace*{.1cm}
    \textbf{Output}: Multi-step Koopman predictor with $E,F$ matrices, and nonlinear observables $\psi$.
\end{algorithm}

\end{document}